\documentclass[lettersize,journal]{IEEEtran}
\usepackage{mathtools}
\usepackage{amsmath}
\usepackage[shortlabels]{enumitem}

\usepackage{graphicx,epic,eepic,epsfig,latexsym,verbatim,color}
 
\usepackage{amsfonts}       
\usepackage{nicefrac}       
\usepackage[linesnumbered,ruled,procnumbered]{algorithm2e}
\usepackage[table]{xcolor}

\usepackage{framed}
\definecolor{shadecolor}{rgb}{0.9,0.90,0.9}
\usepackage{bbm}
 
\usepackage{float}
\usepackage{tikz}
\usetikzlibrary{chains}
\usetikzlibrary{fit}

\usepackage{epsfig}
\usetikzlibrary{shapes.symbols,patterns} 
\usepackage{pgfplots}
\pgfplotsset{compat=1.18}

\usepackage[strict]{changepage}
\usepackage{hyperref}
\hypersetup{colorlinks=true,citecolor=blue,linkcolor=blue,filecolor=blue,urlcolor=blue,breaklinks=true}

\usepackage[marginal]{footmisc}
\usepackage{url}
\usepackage{theorem}

\newtheorem{definition}{Definition}
\newtheorem{proposition}{Proposition}

\newtheorem{theorem}[proposition]{Theorem}

\def\squareforqed{\hbox{\rlap{$\sqcap$}$\sqcup$}}
\def\qed{\ifmmode\squareforqed\else{\unskip\nobreak\hfil
\penalty50\hskip1em\null\nobreak\hfil\squareforqed
\parfillskip=0pt\finalhyphendemerits=0\endgraf}\fi}
\def\endenv{\ifmmode\;\else{\unskip\nobreak\hfil
\penalty50\hskip1em\null\nobreak\hfil\;
\parfillskip=0pt\finalhyphendemerits=0\endgraf}\fi}
\newenvironment{proof}{\noindent \textbf{{Proof~} }}{\hfill $QED$}

\newcounter{remark}
\newenvironment{remark}[1][]{\refstepcounter{remark}\par\medskip\noindent%
\textbf{Remark~\theremark #1} }{\medskip}

\newcounter{example}

\mathchardef\ordinarycolon\mathcode`\:
\mathcode`\:=\string"8000
\def\vcentcolon{\mathrel{\mathop\ordinarycolon}}
\begingroup \catcode`\:=\active
  \lowercase{\endgroup
  \let :\vcentcolon
  }

\usepackage{cleveref}
\usepackage{graphicx}

\RequirePackage[framemethod=default]{mdframed}
\newmdenv[skipabove=7pt,
skipbelow=7pt,
backgroundcolor=darkblue!15,
innerleftmargin=5pt,
innerrightmargin=5pt,
innertopmargin=5pt,
leftmargin=0cm,
rightmargin=0cm,
innerbottommargin=5pt,
linewidth=1pt]{tBox}

\newmdenv[skipabove=7pt,
skipbelow=7pt,
backgroundcolor=red!15,
innerleftmargin=5pt,
innerrightmargin=5pt,
innertopmargin=5pt,
leftmargin=0cm,
rightmargin=0cm,
innerbottommargin=5pt,
linewidth=1pt]{rBox}

\newmdenv[skipabove=7pt,
skipbelow=7pt,
backgroundcolor=blue2!25,
innerleftmargin=5pt,
innerrightmargin=5pt,
innertopmargin=5pt,
leftmargin=0cm,
rightmargin=0cm,
innerbottommargin=5pt,
linewidth=1pt]{dBox}
\newmdenv[skipabove=7pt,
skipbelow=7pt,
backgroundcolor=darkkblue!15,
innerleftmargin=5pt,
innerrightmargin=5pt,
innertopmargin=5pt,
leftmargin=0cm,
rightmargin=0cm,
innerbottommargin=5pt,
linewidth=1pt]{sBox}
\definecolor{darkblue}{RGB}{0,76,156}
\definecolor{darkkblue}{RGB}{0,0,153}
\definecolor{blue2}{RGB}{102,178,255}
\definecolor{darkred}{RGB}{195,0,0}

\newcommand{\nc}{\newcommand}
\nc{\rnc}{\renewcommand}
\nc{\lbar}[1]{\overline{#1}}
\nc{\bra}[1]{\langle#1|}
\nc{\ket}[1]{|#1\rangle}
\nc{\dketbra}[2]{\vert #1 \rangle \hspace{-.8mm} \rangle \hspace{-.4mm} \langle\hspace{-.8mm}\langle #2 \vert}
\nc{\dbra}[1]{\langle\hspace{-.8mm}\langle #1\vert}
\nc{\dket}[1]{\vert#1\rangle\hspace{-.8mm}\rangle}
\nc{\ketbra}[2]{|#1\rangle\!\langle#2|}
\nc{\braket}[2]{\langle#1|#2\rangle}

\nc{\proj}[1]{| #1\rangle\!\langle #1 |}
\nc{\avg}[1]{\langle#1\rangle}
\nc{\rank}{\operatorname{Rank}}
\nc{\smfrac}[2]{\mbox{$\frac{#1}{#2}$}}
\nc{\tr}{\operatorname{Tr}}
\nc{\ox}{\otimes}
\nc{\dg}{\dagger}
\nc{\dn}{\downarrow}
\nc{\cA}{{\cal A}}
\nc{\cB}{{\cal B}}
\nc{\cC}{{\cal C}}
\nc{\cD}{{\cal D}}
\nc{\cE}{{\cal E}}
\nc{\cF}{{\cal F}}
\nc{\cG}{{\cal G}}
\nc{\cH}{{\cal H}}
\nc{\cI}{{\cal I}}
\nc{\cJ}{{\cal J}}
\nc{\cK}{{\cal K}}
\nc{\cL}{{\cal L}}
\nc{\cM}{{\cal M}}
\nc{\cN}{{\cal N}}
\nc{\cO}{{\cal O}}
\nc{\cP}{{\cal P}}
\nc{\cQ}{{\cal Q}}
\nc{\cR}{{\cal R}}
\nc{\cS}{{\cal S}}
\nc{\cT}{{\cal T}}
\nc{\cU}{{\cal U}}
\nc{\cV}{{\cal V}}
\nc{\cX}{{\cal X}}
\nc{\cY}{{\cal Y}}
\nc{\cZ}{{\cal Z}}
\nc{\cW}{{\cal W}}
\nc{\csupp}{{\operatorname{csupp}}}
\nc{\qsupp}{{\operatorname{qsupp}}}
\nc{\var}{{\operatorname{var}}}
\nc{\rar}{\rightarrow}
\nc{\lrar}{\longrightarrow}
\nc{\polylog}{{\operatorname{polylog}}}
\nc{\wt}{{\operatorname{wt}}}
\nc{\av}[1]{{\left\langle {#1} \right\rangle}}
\nc{\supp}{{\operatorname{supp}}}
\nc{\VComb}{{\widetilde{\cal C}}}
\nc{\VChoi}{{\widetilde{C}}}

\nc{\argmin}{{\operatorname{argmin}}}

\def\x{\xi}

\nc{\RR}{{{\mathbb R}}}
\nc{\CC}{{{\mathbb C}}}
\nc{\FF}{{{\mathbb F}}}
\nc{\NN}{{{\mathbb N}}}
\nc{\ZZ}{{{\mathbb Z}}}
\nc{\PP}{{{\mathbb P}}}
\nc{\QQ}{{{\mathbb Q}}}
\nc{\UU}{{{\mathbb U}}}
\nc{\EE}{{{\mathbb E}}}
\nc{\id}{{\operatorname{id}}}

\nc{\CHSH}{{\operatorname{CHSH}}}

\nc{\<}{\langle}
\rnc{\>}{\rangle}
\nc{\rU}{\mbox{U}}

\nc{\ob}[1]{#1}

\usepackage{tikz}
\usepackage{hyperref}
\hypersetup{colorlinks=true,citecolor=blue,linkcolor=blue,filecolor=blue,urlcolor=blue,breaklinks=true}

\makeatletter
\def\grd@save@target#1{%
  \def\grd@target{#1}}
\def\grd@save@start#1{%
  \def\grd@start{#1}}
\tikzset{
  grid with coordinates/.style={
    to path={%
      \pgfextra{%
        \edef\grd@@target{(\tikztotarget)}%
        \tikz@scan@one@point\grd@save@target\grd@@target\relax
        \edef\grd@@start{(\tikztostart)}%
        \tikz@scan@one@point\grd@save@start\grd@@start\relax
        \draw[minor help lines,magenta] (\tikztostart) grid (\tikztotarget);
        \draw[major help lines] (\tikztostart) grid (\tikztotarget);
        \grd@start
        \pgfmathsetmacro{\grd@xa}{\the\pgf@x/1cm}
        \pgfmathsetmacro{\grd@ya}{\the\pgf@y/1cm}
        \grd@target
        \pgfmathsetmacro{\grd@xb}{\the\pgf@x/1cm}
        \pgfmathsetmacro{\grd@yb}{\the\pgf@y/1cm}
        \pgfmathsetmacro{\grd@xc}{\grd@xa + \pgfkeysvalueof{/tikz/grid with coordinates/major step}}
        \pgfmathsetmacro{\grd@yc}{\grd@ya + \pgfkeysvalueof{/tikz/grid with coordinates/major step}}
        \foreach \x in {\grd@xa,\grd@xc,...,\grd@xb}
        \node[anchor=north] at (\x,\grd@ya) {\pgfmathprintnumber{\x}};
        \foreach \y in {\grd@ya,\grd@yc,...,\grd@yb}
        \node[anchor=east] at (\grd@xa,\y) {\pgfmathprintnumber{\y}};
      }
    }
  },
  minor help lines/.style={
    help lines,
    step=\pgfkeysvalueof{/tikz/grid with coordinates/minor step}
  },
  major help lines/.style={
    help lines,
    line width=\pgfkeysvalueof{/tikz/grid with coordinates/major line width},
    step=\pgfkeysvalueof{/tikz/grid with coordinates/major step}
  },
  grid with coordinates/.cd,
  minor step/.initial=.2,
  major step/.initial=1,
  major line width/.initial=2pt,
}
\makeatother

\usepackage{thmtools}
\usepackage{thm-restate}
\usepackage{etoolbox}
\makeatletter
\def\problem@s{}
\newcounter{problems@cnt}

\newcommand{\allproblems}{\problem@s}
\makeatother

\usepackage{tikz}
\usetikzlibrary{positioning}
\usetikzlibrary{shapes.geometric}
\usetikzlibrary{calc}
\definecolor{tensorblue}{rgb}{0.8,0.9,1}
\tikzset{ten/.style={fill=tensorblue}}

\usepackage{authblk} 
\usepackage[numbers,sort&compress]{natbib}
\usepackage{tikzit}
\usepackage{tikzscale}

\begin{document}

\title{Distilling Unitary Operations: A No-Go Theorem and Minimal Realization}

\author[1]{Jiayi Zhao}
\author[1]{Yu-Ao Chen}
\author[1]{Guocheng Zhen}
\author[1]{Chengkai Zhu}
\author[2]{Ranyiliu Chen\thanks{Corresponding author: Ranyiliu Chen. email: chenranyiliu@quantumsc.cn.}}
\author[1]{Xin Wang\thanks{Corresponding author: Xin Wang. email: felixxinwang@hkust-gz.edu.cn.}}
\affil[1]{\small Thrust of Artificial Intelligence, Information Hub,\par The Hong Kong University of Science and Technology (Guangzhou), Guangdong 511453, China}
\affil[2]{Quantum Science Center of Guangdong-Hong Kong-Macao Greater Bay Area, Shenzhen 518045, China}

\markboth{}%
{Shell \MakeLowercase{\textit{et al.}}: A Sample Article Using IEEEtran.cls for IEEE Journals}

\IEEEpubid{}

\maketitle

\begin{abstract}
Quantum gates executed on physical hardware are inevitably degraded by environmental noise. While state purification effectively distills static quantum resources, the dynamic execution of quantum algorithms requires a higher-order approach to mitigate errors on the operations themselves. In this work, we investigate \emph{universal unitary purification}: the task of utilizing a quantum higher-order operation to partially restore the ideal action of an unknown unitary corrupted by a known noise model. Focusing on canonical depolarizing noise, we first reveal a fundamental operational obstruction. We prove that within the indefinite causal order framework, no nontrivial 2-slot higher-order operation can universally purify the set of single-qubit unitaries. Overcoming this strict limitation, we establish that a 3-slot parallel architecture provides the minimal realization for non-trivial purification. We analytically derive the optimal average fidelity within the parallel 3-slot class, demonstrating that it strictly surpasses trivial strategies by systematically utilizing ancillary qubits as a quantum memory to absorb errors. Furthermore, we provide a concrete quantum circuit construction attaining this parallel optimum. Our results establish the strict theoretical boundaries of distilling clean operations from noisy gates, offering immediate architectural insights for robust gate design.
\end{abstract}

\begin{IEEEkeywords}
Unitary Purification, Higher-order Quantum Operation, Depolarizing Noise, Semidefinite Programming.
\end{IEEEkeywords}

\section{Introduction}
\IEEEPARstart{T}{he} pursuit of quantum technologies with superior performance is central to the advancement of quantum information processing, promising massive speedups for complex computational problems \cite{shor1997,biamonte2017quantum,Cao2019}. However, the intrinsic susceptibility of quantum devices to environmental disturbances significantly hinders their operational integrity. In practice, quantum systems inevitably interact with their environment, leading to deviations from the ideal scenario of pure states and unitary dynamics \cite{Nielsen2000,Preskill2018}.

To address this challenge, several approaches have been developed, including quantum error correction \cite{steane1996error} and error mitigation \cite{cai2023quantum}. Alongside these techniques, quantum state purification \cite{cirac1999optimal,keyl2001rate,fiuravsek2004optimal,childs2025streaming,yao2025protocols,he2026no,zhao2026power} has emerged as a vital strategy. When multiple copies of a noisy quantum state are available, state purification protocols can be applied to  `distill' a state with a higher fidelity to the initial ideal pure state. State-of-the-art quantum state purification has evolved from foundational global operations, such as the Cirac-Ekert-Macchiavello (CEM) protocol \cite{cirac1999optimal}, to highly specialized and resource-efficient methods. Recent advancements include streaming purification \cite{childs2025streaming} and generalized semidefinite programming (SDP) frameworks that map fidelity-probability trade-offs while minimizing ancilla overhead via parameterized quantum circuits \cite{yao2025protocols}. Moving into the operationally constrained regimes, recent findings establish no-go theorems under classically simulatable \cite{he2026no} or local (LOCC) \cite{zhao2026power} conditions for certain symmetric state sets.

\begin{figure}[t]
    \centering
    \includegraphics[width=1\linewidth]{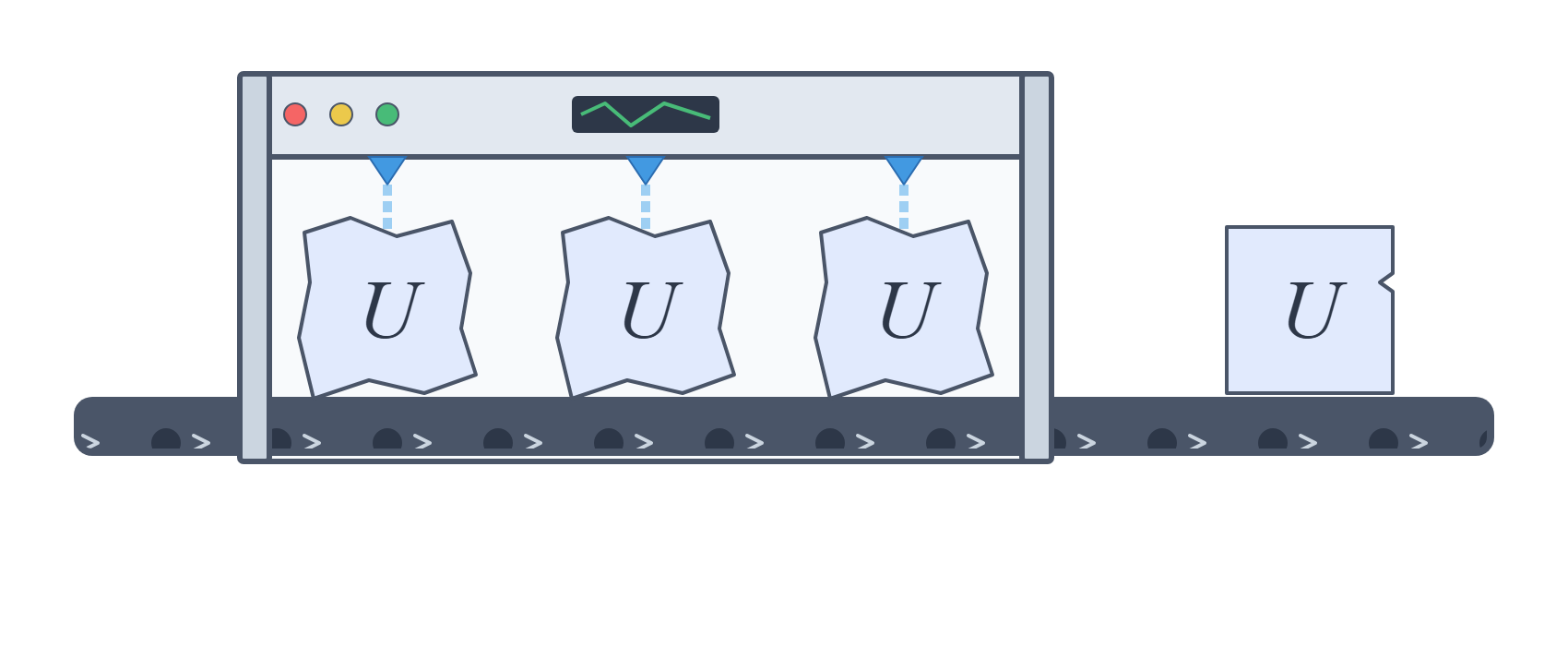}
    \caption{Conceptual illustration of unitary purification. Multiple copies of a noisy unitary channel (depicted as the jagged $U$ blocks on the left) are processed through a purification mechanism to produce a single output channel approximating the ideal $U$.} 
    \label{fig:up}
\end{figure}

While state purification successfully mitigates noise for static quantum resources, quantum computation fundamentally relies on the dynamic execution of quantum operations. In practice, processors are tasked with applying specific gates that are inevitably degraded by noise. If one simply applies the noisy operation and subsequently performs state purification on the resulting output, then state purification acts strictly as a \emph{post-processing} step. Framing the error mitigation problem at the operational level, however, inherently allows for \emph{pre-processing}. By interacting with the system before it is subjected to the noisy dynamics, we can proactively protect the input information from anticipated errors—akin to noisy channel encoding. Furthermore, this dynamic perspective introduces profound new theoretical dimensions: in standard state purification, it is a well-established result that adaptive, sequential strategies provide no advantage over parallel ones. Yet, when manipulating operations rather than static states, temporal ordering may become a non-trivial resource, raising the fundamental question of whether a performance gap exists between parallel and sequential architectures. Together, these considerations point toward a fundamental question: If we are given multiple access to an unknown, polluted quantum channel, can we distill an operation that more closely approximates the ideal, clean unitary? 

Motivated by this question, in this work we elevate the concept of state purification to the level of quantum channels, investigating the power and limitations of unitary purification (see Fig.~\ref{fig:up}). This would provide a strictly more general and physically encompassing framework than state purification alone. Formally, we model this scenario using the framework of quantum higher-order operations (higher-order maps that transform quantum channels into quantum channels) in the language of quantum strategies \cite{Chiribella2008PRL,Chiribella2009PRA}. Given $n$ uses of an unknown channel representing a polluted version of an ideal unitary channel $\mathcal{U}$ under noise $\mathcal{N}$, we seek to find a universal higher-order operation (or quantum strategy) $\mathcal{C}$ that outputs a purified channel. The condition for this higher-order operation $\mathcal{C}$ being a \emph{universal purification} is that the output channel achieves a higher channel fidelity with the ideal unitary channel $\mathcal{U}$ than the original noisy implementation $\mathcal{N}\circ\mathcal{U}$ for all possible unitary channels $\mathcal{U}$.

Applying this framework, we investigate the feasibility of universal unitary purification under depolarizing noise, a canonical quantum error model. We first reveal a fundamental obstruction: no nontrivial 2-slot strategy can universally purify the set of single-qubit unitaries affected by depolarizing noise, even for indefinite causal order (ICO) ones. Specifically, we show that the optimal strategy trivially reduces to completely discarding one input and applying the identity operation to the other. Going beyond this limitation, we fully characterize the optimal parallel 3-slot higher-order operation. For this regime, we derive the achievable fidelity in closed form, explicitly demonstrating that it surpasses the limits of trivial strategies, and we construct an explicit quantum circuit implementation to realize the protocol. Overall, our results establish both the strict boundaries and the operational capabilities of distilling clean operations from noisy quantum gates in the qubit case with the canonical depolarizing noise model.

\section{Notation and Definitions}

\subsection{Quantum strategy framework}

Let $d$ denote the dimension of the Hilbert space on which the unknown ideal unitary $U$ acts. We represent quantum operations using the Choi-Jamio{\l}kowski isomorphism~\cite{Choi1975}.
The unnormalised maximally entangled state is denoted as $\dket{I}=\sum_i|ii\rangle$,
such that the Choi vector of a unitary $U$ is given by $\dket{U}=(U\ox I)\dket{I}$. For a general quantum channel $\mathcal{A}$, its corresponding Choi operator is denoted by $J^{\mathcal{A}}$.

Quantum higher-order operations that map quantum channels to quantum channels are described as quantum strategies~\cite{Chiribella2008PRL,Chiribella2009PRA}. An $n$-slot quantum strategy $\cC$ is mathematically represented by its Choi-Jamio{\l}kowski operator $C_{P I^n O^n F}$, where $I^n O^n = I_1 O_1 I_2 O_2 \cdots I_n O_n$ denotes the internal slot systems, $P$ is the initial preparation (past) system, and $F$ is the final output (future) system. Link products~\cite{Chiribella2009PRA}, denoted by $*$, compose operations by contracting the tensor factors of operators sharing a common subsystem.

\begin{figure}[!t]
\centering
\scalebox{0.55}{\begin{tikzpicture}[scale=1.00]
	\draw (0,5.75) -- (0,1) -- (22.25,1) -- (22.25,5.75) 
		-- (20.5,5.75) -- (20.5,3) 
		-- (17,3) -- (17,5.75) 
		-- (15.25,5.75) -- (15.25,3) 
		-- (12.25,3) -- (12.25,5.75) 
		-- (10.5,5.75) -- (10.5,3) 
		-- (7,3) -- (7,5.75) 
		-- (5.25,5.75) -- (5.25,3) 
		-- (1.75,3) -- (1.75,5.75) -- cycle;

	\node at (11.125, 2) {$\mathcal{C}$};

	\draw (-1,4.75) -- (0,4.75);
	\node at (-0.5, 5.25) {$P$};

	\draw (1.75,4.75) -- (2.75,4.75);
	\node at (2.25, 5.25) {$I_1$};
	\draw (4.25,4.75) -- (5.25,4.75);
	\node at (4.75, 5.25) {$O_1$};

	\draw (7,4.75) -- (8,4.75);
	\node at (7.5, 5.25) {$I_2$};
	\draw (9.5,4.75) -- (10.5,4.75);
	\node at (10, 5.25) {$O_2$};

	\draw (12.25,4.75) -- (12.75,4.75);
	\node at (13.75, 4.75) {$\cdots$};
	\draw (14.75,4.75) -- (15.25,4.75);

	\draw (17,4.75) -- (18,4.75);
	\node at (17.5, 5.25) {$I_n$};
	\draw (19.5,4.75) -- (20.5,4.75);
	\node at (20, 5.25) {$O_n$};

	\draw (22.25,4.75) -- (23.25,4.75);
	\node at (22.75, 5.25) {$F$};
\end{tikzpicture}}
\caption{Illustration of the $n$-slot sequential strategy structure $\cC$, where operations are applied in a strictly fixed causal order.}
\label{fig:seq1}
\end{figure}

\begin{figure}[!t]
\centering
\scalebox{0.7}{\begin{tikzpicture}[scale=0.8]
	\begin{pgfonlayer}{nodelayer}
		\node [style=none] (0) at (3, 6) {$P$};
		\node [style=none] (1) at (17.75, 6) {$F$};
		\node [style=none] (2) at (10, 11.25) {$\mathcal{C}$};
		\node [style=none] (5) at (5, 16.88) {};
		\node [style=none] (6) at (7, 16.88) {};
		\node [style=none] (7) at (7, 12.38) {};
		\node [style=none] (8) at (13.5, 12.38) {};
		\node [style=none] (9) at (13.5, 16.88) {};
		\node [style=none] (10) at (15.5, 16.88) {};
		\node [style=none] (11) at (15.5, -4.88) {};
		\node [style=none] (12) at (13.5, -4.88) {};
		\node [style=none] (13) at (13.5, -0.38) {};
		\node [style=none] (14) at (7, -0.38) {};
		\node [style=none] (15) at (7, -4.88) {};
		\node [style=none] (16) at (5, -4.88) {};
		\node [style=none] (25) at (4, 6) {};
		\node [style=none] (26) at (5, 6) {};
		\node [style=none] (27) at (15.5, 6) {};
		\node [style=none] (28) at (16.75, 6) {};
		\node [style=none] (29) at (7, 14.63) {};
		\node [style=none] (32) at (13.5, 14.63) {};
		\node [style=none] (37) at (7.5, 15.15) {$I_1$};
		\node [style=none] (38) at (13, 15.15) {$O_1$};
		\node [style=none] (41) at (7.75, 14.63) {};
		\node [style=none] (42) at (12.75, 14.63) {};
		\node [style=none] (33) at (7, -2.63) {};
		\node [style=none] (36) at (13.5, -2.63) {};
		\node [style=none] (39) at (7.5, -2.1) {$I_n$};
		\node [style=none] (40) at (13, -2.1) {$O_n$};
		\node [style=none] (43) at (7.75, -2.63) {};
		\node [style=none] (44) at (12.75, -2.63) {};
		\node [style=none] (45) at (7, 9.75) {};
		\node [style=none] (46) at (7, 6.75) {};
		\node [style=none] (47) at (13.5, 9.75) {};
		\node [style=none] (48) at (13.5, 6.75) {};
		\node [style=none] (53) at (7, 8.25) {};
		\node [style=none] (54) at (13.5, 8.25) {};
		\node [style=none] (55) at (7.5, 8.78) {$I_2$};
		\node [style=none] (56) at (13, 8.78) {$O_2$};
		\node [style=none] (57) at (7.75, 8.25) {};
		\node [style=none] (58) at (12.75, 8.25) {};
		\node [style=none] (49) at (7, 4.88) {};
		\node [style=none] (50) at (7, 1.88) {};
		\node [style=none] (51) at (13.5, 4.88) {};
		\node [style=none] (52) at (13.5, 1.88) {};
		\node [style=none] (59) at (7, 3.38) {};
		\node [style=none] (60) at (13.5, 3.38) {};
		\node [style=none] (61) at (7.75, 3.38) {};
		\node [style=none] (62) at (12.75, 3.38) {};
		\node [style=none] (63) at (10, 3.38) {$\vdots$};
	\end{pgfonlayer}
	\begin{pgfonlayer}{edgelayer}
		\draw [style=none] (16.center) to (5.center);
		\draw [style=none] (5.center) to (6.center);
		\draw [style=none] (6.center) to (7.center);
		\draw [style=none] (7.center) to (8.center);
		\draw [style=none] (8.center) to (9.center);
		\draw [style=none] (9.center) to (10.center);
		\draw [style=none] (10.center) to (11.center);
		\draw [style=none] (11.center) to (12.center);
		\draw [style=none] (12.center) to (13.center);
		\draw [style=none] (13.center) to (14.center);
		\draw [style=none] (14.center) to (15.center);
		\draw [style=none] (15.center) to (16.center);
		\draw [style=none] (25.center) to (26.center);
		\draw [style=none] (27.center) to (28.center);
		\draw (29.center) to (41.center);
		\draw (32.center) to (42.center);
		\draw (33.center) to (43.center);
		\draw (36.center) to (44.center);
		\draw (45.center) to (47.center);
		\draw (48.center) to (47.center);
		\draw (45.center) to (46.center);
		\draw (46.center) to (48.center);
		\draw (49.center) to (51.center);
		\draw (52.center) to (51.center);
		\draw (49.center) to (50.center);
		\draw (50.center) to (52.center);
		\draw (53.center) to (57.center);
		\draw (54.center) to (58.center);
		\draw (59.center) to (61.center);
		\draw (60.center) to (62.center);
	\end{pgfonlayer}
\end{tikzpicture}}
\caption{Illustration of the $n$-slot indefinite causal order (ICO) strategy structure $\cC$, relaxing fixed temporal ordering while prohibiting causal loops.}
\label{fig:ico1}
\end{figure}

The fundamental distinction among different routing strategies manifests entirely in the linear constraints imposed on their Choi-Jamio{\l}kowski operators. To rigorously generalise these constraints to $n$ slots, we formalise the notation ${}_{X} C := \frac{I_X}{d_X} \ox \tr_X[C]$ to denote the operation of taking the partial trace over a subsystem $X$ and replacing it with the normalised identity matrix, where $d_X$ is the dimension of system $X$. While a parallel strategy simply requires that no output system $O_i$ can signal to any input system $I_j$ or the final output $F$ (meaning the operations are executed simultaneously and independently), the sequential and indefinite causal order (ICO) strategies demand more complex structural constraints. We use $\mathrm{Par}$, $\mathrm{Seq}$, and $\mathrm{ICO}$ as shorthand labels for the parallel, sequential, and indefinite-causal-order strategy classes, respectively.

In a sequential strategy, the operations follow a strict, predetermined causal order, meaning the action of the first operation causally precedes the second, progressing linearly up to the $n$-th operation, illustrated in Fig.~\ref{fig:seq1}. The principle of causality dictates that the operation at slot $k$ cannot depend on the outputs of any future slots $k'>k$. In the Choi representation, a sequential strategy must satisfy the following linear conditions:
\begin{equation}
\begin{cases}
    C \ge 0 \\
    {}_{F} C = {}_{O_n F} C \\
    {}_{I_k O_k \cdots I_n O_n F} C = {}_{O_{k-1} I_k O_k \cdots I_n O_n F} C \quad \forall k \in \{2, \dots, n\} \\
    {}_{I^n O^n F} C = {}_{P I^n O^n F} C \\
    \tr[C] = d_P \prod_{i=1}^n d_{I_i}
\end{cases}
\end{equation}

In contrast, an ICO strategy relaxes the assumption of a globally fixed causal order while strictly prohibiting causal loops to prevent logical paradoxes (such as signalling backwards in time). This general structure is illustrated in Fig.~\ref{fig:ico1}. For an $n$-slot ICO strategy, the core requirement is that tracing out any subset of slots must yield a valid ICO strategy on the remaining systems. Let $S \subseteq \{1, \dots, n\}$ be a non-empty subset of slots, and let $\bar{S} = \{1, \dots, n\} \setminus S$ denote its complement. The overarching $n$-slot ICO constraints can be compactly formulated as:
\begin{equation}\label{eq:ico1}
\begin{cases}
    C \ge 0 \\
    \sum_{S' \subseteq S} (-1)^{|S'|} {}_{I_{\bar{S}} O_{\bar{S}} O_{S'} F} C = 0 \quad \forall S \subseteq \{1, \dots, n\}, \ S \neq \emptyset \\
    {}_{I^n O^n F} C = {}_{P I^n O^n F} C \\
    \tr[C] = d_P \prod_{i=1}^n d_{I_i}
\end{cases}
\end{equation}
This subset-summation formula generalises the lower-slot constraints. For example, evaluating the sum for a 2-slot scenario where $S=\{1,2\}$ enforces the global non-signalling condition ${}_{F} C + {}_{O_1 O_2 F} C = {}_{O_1 F} C + {}_{O_2 F} C$, while setting $S=\{1\}$ and $S=\{2\}$ yields the marginal valid-strategy causality conditions ${}_{I_2 O_2 F} C = {}_{O_1 I_2 O_2 F} C$ and ${}_{I_1 O_1 F} C = {}_{I_1 O_1 O_2 F} C$, respectively.

\subsection{Problem formulation}

The noise corrupting a quantum gate is modelled by a completely positive and trace-preserving (CPTP) quantum channel $\cN$. Given an unknown ideal (noiseless) unitary channel $\cU(\cdot)=U(\cdot)U^\dagger$, the corresponding physical implementation is the noisy gate $\cN\circ\cU$. The fundamental task of unitary purification is to recover a higher-fidelity approximation of the ideal unitary $\cU$ given $n$ uses of the noisy gate $\cN\circ\cU$, potentially adaptively.

Suppose we are given a noisy operation $\cN\circ\cU$, represented by its Choi operator $J^{\cN\circ\cU} \coloneqq J^{\cN} * \dketbra{U}{U}$ where $U$ is the target pure unitary and $\cN$ is a given noise channel. Inserting $n$ copies of this noisy operation into the slots of the comb $\mathcal{C}$ yields an effective channel from $P$ to $F$, denoted by $\mathcal{E}_U$. The corresponding Choi operator is given by
\begin{equation}
    J^{\mathcal{E}_U} \coloneqq C_{PI^nO^nF} * \left(J^{\mathcal{N}\circ\mathcal{U}}\right)_{I^nO^n}^{\otimes n}. 
    \label{eq:output_choi}
\end{equation}
Given the Choi operator in~\eqref{eq:output_choi}, the channel fidelity with respect to the target unitary $\cU$ is evaluated as $F(J^{\mathcal{E}_U}, \dketbra{U}{U}) := \tr[J^{\mathcal{E}_U}\,\dketbra{U}{U}]/d^2$. For a purification strategy to be considered effective, this output fidelity must strictly exceed the baseline fidelity of the bare noisy channel $\cN\circ\cU$ to the ideal unitary.

If the strategy $\cC$ can increase the channel fidelity to $\cU$ for every $U\in SU(d)$, then $\cC$ is a \emph{universal} purification protocol for the noisy channel $\cN$:

\begin{definition}[Universal unitary purification protocol]
\label{def:purification_protocol}
A $n$-slot quantum strategy $\cC$ is a \emph{universal unitary purification protocol} for a noisy channel $\cN$ if it increases the channel fidelity for every $U\in SU(d)$, i.e.,
\begin{equation}
    F\!\left(J^{\mathcal{E}_U}, \dketbra{U}{U}\right)
    \;\geq\;
    F\!\left(J^{\cN\circ\cU},\, \dketbra{U}{U}\right),
    \label{eq:purification_condition}
\end{equation}
for all $U \in SU(d)$.
\end{definition}

We remark that the \emph{trivial} purification protocol, which
preserves the fidelity for every $U$, also fits the above definition (equality holds
in~\eqref{eq:purification_condition} for all $U\in SU(d)$.) On the contrary, a \emph{nontrivial} purification protocol is defined as a protocol that strictly increases the fidelity for at least one unitary. 

To quantify and rigorously compare the performance of different purification higher-order operations, we define the \emph{average fidelity} over all possible target unitaries as
\begin{equation}
    F_{\mathrm{ave}}(\cN, \cC)
    \coloneqq \int \mathrm{d}U\;
       \tr\!\left[
           J^{\mathcal{E}_U} \,\dketbra{U}{U}/d^2
       \right],
    \label{eq:avg_fidelity}
\end{equation}
where the integral is over the Haar measure $\mathrm{d}U$ on $SU(d)$. With the shorthand labels above, let $\mathcal{S}_n^{\mathsf{R}}$ denote the set of $n$-slot strategies in class $\mathsf{R}\in\{\mathrm{Par},\mathrm{Seq},\mathrm{ICO}\}$. We write
\begin{equation}
    F_{\max}^{\mathsf{R}}(n;\cN)
    \coloneqq
    \max_{\cC\in\mathcal{S}_n^{\mathsf{R}}}
    F_{\mathrm{ave}}(\cN,\cC)
    \label{eq:max_avg_fidelity}
\end{equation}
for the corresponding optimal average fidelity. It is then evident from the definition that the average fidelity provides a witness of \emph{non-universality}: if a purification protocol yields zero improvement in the average fidelity defined in~\eqref{eq:avg_fidelity}, then the protocol fails to be non-trivially universal.

Finally, in this work we will primarily focus on the qubit depolarizing channel $\cN_{\gamma}$, defined as $\mathcal{N}_\gamma(\rho)=(1-\gamma)\rho+\gamma I/d$, where $\gamma\in(0,1)$ parametrizes the noise level and $d=2$ is the Hilbert-space dimension.

\section{Fundamental Limits of 2-Slot Architectures}
\label{sec:nogo}

Our first main result reveals that no nontrivial universal $2$-slot purification strategy exists for qubit depolarizing noise:


\begin{theorem}[No-go for $2$-slot purification]
\label{thm:nogo}
For all $\gamma\in(0,1)$, the depolarizing channel $\cN_{\gamma}$ admits no nontrivial $2$-slot ICO
strategy satisfying the universal purification
condition~\eqref{eq:purification_condition}. In fact,
\begin{equation}
    F_{\max}^{\mathrm{ICO}}(2;\cN_\gamma)=1-\frac{3}{4}\gamma.
\end{equation}
\end{theorem}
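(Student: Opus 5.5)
The proof has two parts. First, a value $F_{\max}^{\mathrm{ICO}}(2;\cN_\gamma)=1-\tfrac34\gamma$ settles the no-go claim. The trivial strategy (discard slot~2, route $P\to I_1$, $O_1\to F$) attains exactly $F(J^{\cN_\gamma\circ\cU},\dketbra{U}{U})=(1-\gamma)+\gamma/d^2=1-\tfrac34\gamma$ for every $U$. Suppose some ICO strategy satisfied \eqref{eq:purification_condition} for all $U$ with strict inequality at one $U_0$. The output fidelity $U\mapsto \tr[J^{\cE_U}\dketbra{U}{U}]/4$ is a polynomial, hence continuous, in $U$. Strict inequality would therefore hold on a neighbourhood of $U_0$ of positive Haar measure, and the average would exceed $1-\tfrac34\gamma$, a contradiction. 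So everything reduces to proving the upper bound $F_{\max}^{\mathrm{ICO}}(2;\cN_\gamma)\le 1-\tfrac34\gamma$.

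For the upper bound, I write the average fidelity as a linear functional $F_{\mathrm{ave}}=\tr[C\,\Omega]$ with
\begin{equation}
\Omega=\frac{1}{d^2}\int \mathrm{d}U\,\big(J^{\cN_\gamma\circ\cU}\big)^{T}_{I_1O_1}\ox\big(J^{\cN_\gamma\circ\cU}\big)^{T}_{I_2O_2}\ox\dketbra{U^*}{U^*}_{PF}
\end{equation}
(transposes/conjugations fixed by the link-product convention). I expand $J^{\cN_\gamma\circ\cU}=(1-\gamma)\dketbra{U}{U}+\tfrac{\gamma}{2}I\ox I$. This splits $\Omega$ into four pieces. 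The $(1-\gamma)^2$ piece is the twirl $\int \dketbra{U}{U}^{\ox 3}$, which is the known noiseless ``3-copy'' performance operator. The two cross pieces of weight $(1-\gamma)\gamma/2$ contain $\int \dketbra{U}{U}^{\ox 2}$ tensored with an identity on the other slot. The $\gamma^2/4$ piece is $I\ox I\ox \int\dketbra{U}{U}$, i.e.\ an identity on all slots times $I/d$-like on $PF$. Each integral is evaluated by Schur--Weyl/Weingarten calculus on $(\CC^2)^{\ox 3}$, or equivalently via the $U\ox U^*$ commutant. I also symmetrize $C$ under $U$-covariance, which is legitimate since the ICO constraints are invariant under local unitaries on the slots, and under slot exchange.

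The optimization $\max\tr[C\Omega]$ subject to \eqref{eq:ico1} is an SDP, so I bound it by exhibiting a dual feasible point. The ICO dual asks for an operator $W\ge\Omega$ whose form is the ``dual affine'' one, namely $W=\lambda\,\bar{W}$ with $\bar W$ a valid object in the dual cone of 2-slot process matrices (e.g.\ $\bar W$ decomposing as a sum of operators non-signalling in either order, plus terms annihilated by the constraint projector). Its value is $\lambda=1-\tfrac34\gamma$. My concrete attempt is a convex combination $W=\alpha\,W^{(1)}+(1-\alpha)\,W^{(2)}$. Here $W^{(1)}$ is a witness of the form $X_{PI_1O_1F}\ox I_{I_2O_2}$ certifying that slot~1 alone gives at most fidelity $1-\tfrac34\gamma$, and $W^{(2)}$ is its mirror image. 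Each $X$ is built from the 1-slot dual, which is essentially the statement that a single use of a depolarized unitary cannot be improved: the dual of the 1-slot problem is tight at the identity comb. Then I must verify $W-\Omega\ge 0$. By covariance this reduces to checking positivity block by block on the irreps of $U^{\ox3}\ox U^{*\ox3}$, which for qubits leaves a handful of small scalar or $2\times2$ inequalities in $\gamma$.

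I expect the main obstacle to be this last positivity check. The noiseless term $(1-\gamma)^2\int\dketbra{U}{U}^{\ox3}$ is known to allow fidelity above the single-use value when ICO is available (two uses of $U$ can beat one at estimating/cloning-type tasks). So the witness must exploit the fact that the cross terms and $\gamma^2$ term have too little weight to compensate, uniformly for all $\gamma\in(0,1)$. If the naive separable witness fails, I will first solve the SDP numerically for several $\gamma$ and read off the analytic dual. I would also consider adding to $W$ a term supported on the ``both slots used'' irreps, tuned so that the inequality is tight exactly on the subspace where the trivial strategy lives.
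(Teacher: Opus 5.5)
Your reduction of the no-go claim to the bound $F_{\max}^{\mathrm{ICO}}(2;\cN_\gamma)\le 1-\frac34\gamma$ is correct. The argument via continuity of $U\mapsto F$ and a positive-measure neighbourhood is in fact more careful than the paper's one-line remark, and the general dual formulation is sound. The gap is that the one concrete certificate you propose provably fails for every $\gamma\in(0,1)$.

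\textbf{Why the separable witness cannot work.} $\Omega$ is invariant under $V_P\ox V^{*\ox2}_{I_1I_2}\ox Y^{*\ox2}_{O_1O_2}\ox Y_F$. So any $W\ge\Omega$ can be twirled over this group without losing $W\ge\Omega$ and without changing its dual value.

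\begin{itemize}
\item Every element of the 1-slot dual affine set has the form $Z_{PI_1O_1}\ox I_F$ with $\tr_{O_1}Z=\rho_P\ox I_{I_1}$.
\item The twirl first replaces $Z$ by $\frac12\rho_P\ox I_{I_1O_1}$, and then turns $\rho_P$ into a multiple of $I_P$.
\item The factor on the unused slot is already, or twirls to, a multiple of $I_{I_2O_2}$.
\end{itemize}

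Hence $\alpha W^{(1)}+(1-\alpha)W^{(2)}$ twirls to $\kappa I$ with $8\kappa=1-\frac34\gamma$. Your certificate therefore works if and only if $\lambda_{\max}(\Omega)\le\frac{4-3\gamma}{32}$, i.e. only if the causal constraints are irrelevant beyond $\tr C=8$.

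That condition fails. Write $\Omega=\frac14\int dU\,M_U^{\ox2}\ox\dketbra{U}{U}_{PF}$ with $M_U=(1-\gamma)\dketbra{U^*}{U^*}+\frac\gamma2 I$. Take $v_1=\frac{1}{2\sqrt2}\dket{I}_{PI_1}\dket{I}_{O_1F}\dket{I}_{I_2O_2}$, let $v_2$ be its slot-swapped copy, and set $v=v_1+v_2$. A direct contraction gives $\langle v_i|\Omega|v_i\rangle=\frac{4-3\gamma}{32}$, $\langle v_1|\Omega|v_2\rangle=\frac{(4-3\gamma)^2}{128}$ and $\langle v_1|v_2\rangle=\frac14$. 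Therefore
\begin{equation*}
\frac{\langle v|\Omega|v\rangle}{\langle v|v\rangle}=\frac{(4-3\gamma)(8-3\gamma)}{160}>\frac{4-3\gamma}{32}=\frac18\Bigl(1-\frac34\gamma\Bigr)\quad\text{for all }\gamma\in[0,1).
\end{equation*}

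\textbf{What is missing.} A valid dual certificate must include covariant, non-identity elements of the ICO dual affine space. Examples are combinations such as ${}_{F}Y-{}_{O_1F}Y-{}_{O_2F}Y+{}_{O_1O_2F}Y$, which vanish against every ICO comb. They must be tuned to penalize coherent, non-causal superpositions like $v$, which feeds each slot's output back into its own input. Your proposal does not identify such terms, and ``solve numerically and read off the dual'' is not an argument.

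Your stated reason for expecting difficulty is also off. At $\gamma=0$ a single use already gives fidelity $1$, so the noiseless term gives no advantage. The real obstacle is that $\Omega$ has large eigenvalues on operators that are not causal combs.

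\textbf{How the paper closes this.} The paper avoids the dual entirely and works on the primal side:
\begin{itemize}
\item After the same twirl, Schur's lemma gives the block form $\operatorname{diag}(H_0,H_1\ox I_2,H_2\ox I_2,h_3I_4)\ox I_4$.
\item The ICO constraints become five explicit linear relations among the block entries.
\item The reduced SDP is solved by hand: at the optimum $H_{200}=0$, $H_{000}=3H_{033}$ and $H_{022}=3H_{011}$, leaving a $2\times2$ eigenvalue problem with top eigenvalue $5(1-\gamma)$.
\end{itemize}
To close your argument you need either to complete such a primal reduction or to exhibit an explicit non-product covariant witness.
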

\noindent\underline{\emph{Proof sketch.}} We prove that no $2$-slot ICO strategy can achieve any non-trivial fidelity improvement by showing that the optimal average fidelity improvement over all such protocols is bounded by zero. 

To quantify the average fidelity advantage of a general strategy $\mathcal{C}$ over a trivial purification protocol, we evaluate the objective via a trace expression $\operatorname{Tr}[C\Omega]$. In this formulation, the performance operator $\Omega$ absorbs all components independent of the optimization variable $C$ representing the strategy $\mathcal{C}$ (see Appendix~\ref{app:proof_universal_no-go} for the explicit expression).  

Suppose, by contradiction, that there exists a $2$-slot ICO strategy $\check{\mathcal{C}}$ strictly exceeding the trivial bound. We first note that $\Omega$ possesses a symmetry, as it is invariant under the unitary action parameterized by $V,W\in\mathrm{SU}(2)$. Because the objective is evaluated via this trace, the invariance of $\Omega$ can be directly transferred to $\check{C}$ without loss of optimality. Thus, we may enforce that $\check{C}$ remains invariant under the mapping
\begin{equation}
    \check{C} \;\mapsto\; U_{V,W} \,\check{C}\, U_{V,W}^\dagger,\quad V,W\in\mathrm{SU}(2),
\end{equation}
where $U_{V,W} = V_P\otimes V^{*\otimes 2}_{I^2}\otimes W^{*\otimes 2}_{O^2}\otimes W_F$ in this case. Therefore, we can twirl the Choi operator over the Haar measure to obtain a symmetrized strategy:
\begin{equation} \label{eq:twirl1}
    \bar{C} = \mathbb{E}_{V,W}\bigl[ U_{V,W} \,\check{C}\, U_{V,W}^\dagger \bigr],
\end{equation}
 where $\bar{C}$ remains a valid ICO strategy that strictly exceeds the trivial bound. Furthermore, it explicitly commutes with the group action, satisfying $[\bar{C},\,U_{V,W}]=0$ for all $V,W\in\mathrm{SU}(2)$.

Next, we apply the Clebsch-Gordan decomposition to the representation of $\mathrm{SU}(2) \times \mathrm{SU}(2)$. Invoking Schur's lemma, the commutation relation forces $\bar{C}$ to be block-diagonal in the irreducible decomposition basis. Specifically, there exists a unitary transformation $\tilde{G}$ such that $\bar{C}$ reduces to a handful of scalar parameters labeling the blocks:
\begin{equation} \label{eq:block_diag}
    \tilde{G} \bar{C} \tilde{G}^\dagger = \operatorname{diag}(H_0,\, H_1\otimes I_2,\, H_2\otimes I_2,\, h_3 I_4) \otimes I_4,
\end{equation}
where $H_0 \in \mathbb{C}^{4\times 4}$, $H_1, H_2 \in \mathbb{C}^{2\times 2}$, and $h_3 \in \mathbb{C}$ are all positive semidefinite matrices or scalars. 

This structural constraint drastically reduces the intractable optimization over the full ICO set into a low-dimensional SDP. The ICO linear feasibility constraints trace out to a set of elementary relations among the blocks. By analyzing the feasible region of this SDP (see Appendix for full constraint substitutions), we identify that the objective function reaches its boundary maximum only when $H_{200}=0$, $H_{000}=3H_{033}$, and $H_{022}=3 H_{011}$. 

The optimization is ultimately simplified to maximizing a quadratic form of a $2\times 2$ matrix $M$:
\begin{equation} \label{eq:reduced_sdp}
    \max \quad \vec{v}^T M \vec{v} - \frac{5}{2}(1-\gamma),
\end{equation}
subject to the trace constraint $\|\vec{v}\|^2 \le \frac{1}{2}$, where we define the vector $\vec{v} = (\sqrt{H_{011}}, \sqrt{H_{033}})^T$ and the matrix $M$ is given by
\begin{equation} \label{eq:matrix_M}
    M = (1-\gamma) \begin{pmatrix}
        -1 & 2\sqrt{3} \\
        2\sqrt{3} & 3
    \end{pmatrix}.
\end{equation}
The maximal eigenvalue of $M$ can be analytically computed as $5(1-\gamma)$, with the corresponding normalized eigenstate $\frac{1}{2}(1, \sqrt{3})^T$. Substituting this maximal eigenvalue and the upper bound $\|\vec{v}\|^2 = \frac{1}{2}$ into Eq.~\eqref{eq:reduced_sdp}, the maximum achievable value is exactly:
\begin{equation}
    5(1-\gamma) \times \frac{1}{2} - \frac{5}{2}(1-\gamma) = 0.
\end{equation}
This indicates that the initially assumed strictly positive fidelity improvement is impossible and is bounded below by zero. This contradicts the assumption that $\check{\mathcal{C}}$ exceeds the trivial bound, completing the proof. \hfill$\square$

The detailed proof of Theorem~\ref{thm:nogo} is provided in Appendix~\ref{app:proof_universal_no-go}.


\begin{remark}
    We note that this fundamental no-go result extends beyond the full unitary group to subsets exhibiting sufficient symmetry. For example, Theorem~\ref{thm:nogo} applies to any unitary subset forming a unitary 3-design (\emph{e.g.,} the Clifford group), as the SDP performance evaluation involves terms up to the third tensor power of $U$. However, this structural obstruction vanishes when restricting target operations to less symmetric gate sets. For the specific gate set $\{X,Y,Z,I,H,S\}$, we conduct numerical experiments to compute the optimal average fidelity after purification across different strategies. The SDPs are solved by MATLAB~\cite{MATLAB} with the interpreters CVX~\cite{cvx,gb08} and QETLAB~\cite{qetlab}. We observe distinct performance gaps between parallel, sequential, and indefinite causal order strategies under depolarizing noise (see Fig.~\ref{fig:causalorder}), indicating that the relaxation of symmetry constraints enables indefinite causal order strategies to substantially outperform conventional approaches, thereby indicating a potential regime where causal structures could yield advantages for mitigating error.
\end{remark}

\begin{figure}
    \centering
    \includegraphics[width=1\linewidth]{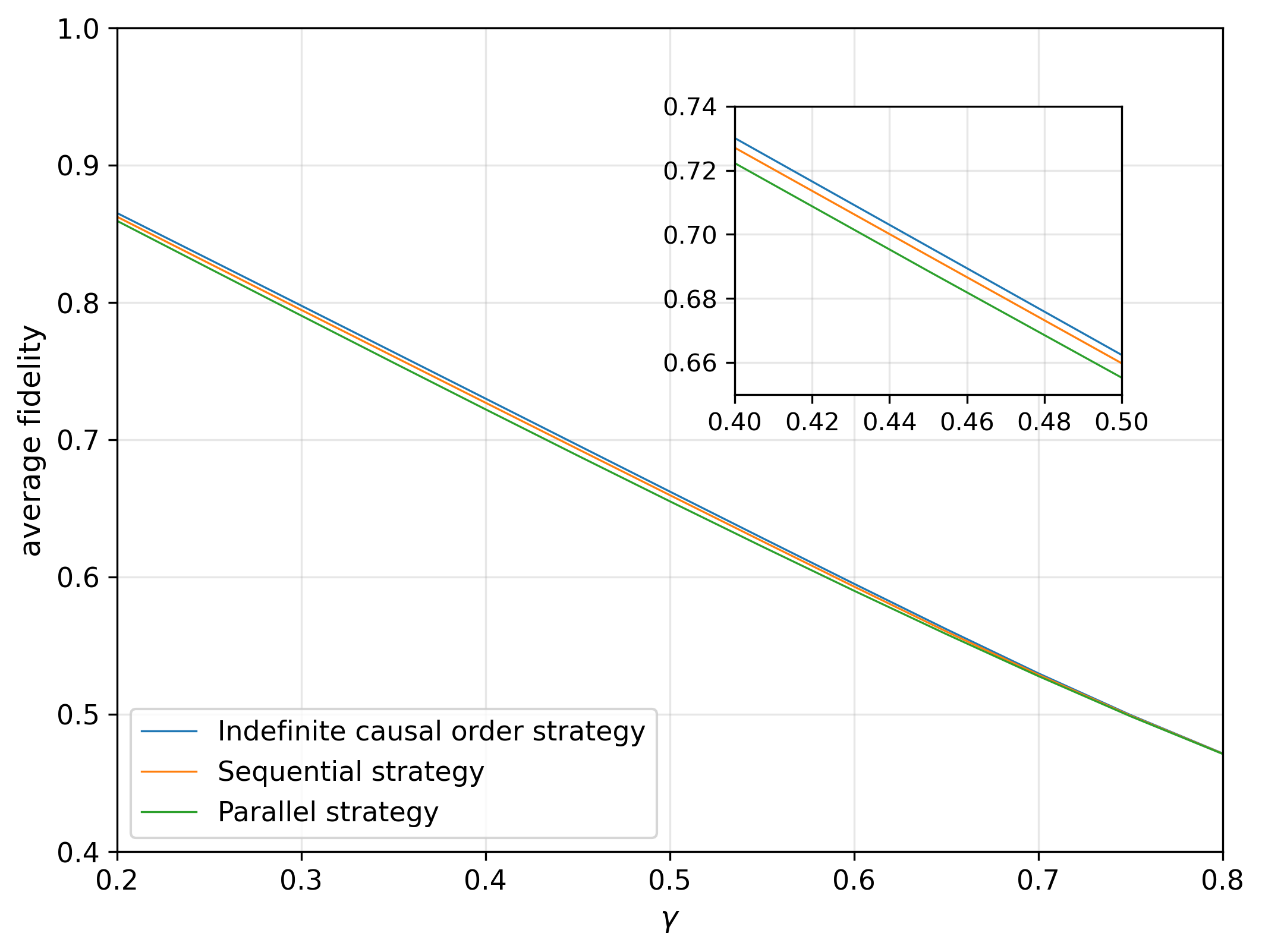}
    \caption{Optimal average fidelity for purification of $\{X,Y,Z,I,H,S\}$ against depolarizing noise under parallel, sequential, and indefinite causal order strategies.}
    \label{fig:causalorder}
\end{figure}


\section{Analytic Optimal Fidelity and Circuit Realization of the 3-Slot Architecture}

Having established the fundamental impossibility of universal unitary purification with a 2-slot architecture, we now turn our attention to the next minimal realization: the 3-slot parallel strategy. Since a 2-slot strategy is strictly obstructed from providing any non-trivial fidelity gain, the 3-slot protocol represents the most resource-efficient architecture we can hope to employ. 

In this section, we provide a fully analytical derivation of the optimal achievable average fidelity for the parallel 3-slot regime and a concrete quantum circuit construction that explicitly realizes this theoretical maximum.

\begin{theorem}[Go for $3$-slot purification]
\label{thm:go}
For all $\gamma\in(0,1)$, the depolarizing channel $\cN_{\gamma}$ admits a nontrivial $3$-slot parallel strategy satisfying the universal purification
condition~\eqref{eq:purification_condition}.
Furthermore, the optimal average fidelity in the parallel $3$-slot class is
\begin{equation}\label{eq:fidelity}
    F_{\max}^{\mathrm{Par}}(3;\cN_\gamma)
    =
    \frac{1}{24}(2 - \gamma)\bigl[12 + (8\sqrt{2} - 9)\gamma + (3 - 8\sqrt{2})\gamma^2\bigr],
\end{equation}
where $\gamma \in (0, 1)$.

\end{theorem}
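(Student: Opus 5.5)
The proof has two parts: computing $F_{\max}^{\mathrm{Par}}(3;\cN_\gamma)$ exactly by symmetry reduction, and showing that an optimal parallel strategy is a nontrivial universal purifier.

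\emph{Reduction to a state problem.} A parallel 3-slot strategy is an encoding channel $P\to I^3A$, followed by the three parallel noisy gates $(\cN_\gamma\circ\cU)^{\otimes 3}$, followed by a decoding channel $O^3A\to F$. Since the depolarizing channel is unitarily covariant, $\cN_\gamma\circ\cU=\cU\circ\cN_\gamma$. The performance operator $\Omega$, defined analogously to the proof of Theorem~\ref{thm:nogo}, is therefore invariant under $V_P\otimes V^{*\otimes 3}_{I^3}\otimes W^{*\otimes 3}_{O^3}\otimes W_F$. Twirling, exactly as in the proof sketch of Theorem~\ref{thm:nogo}, lets us assume without loss of optimality that $C$ commutes with this $\mathrm{SU}(2)\times\mathrm{SU}(2)$ action; the parallel constraints are preserved by the twirl.

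\emph{Block structure.} Decompose $(\CC^2)^{\otimes 3}=\mathcal{H}_{3/2}\oplus\mathcal{H}_{1/2}\otimes\CC^2$, where $\CC^2$ is the multiplicity space. Combined with $P$ and $F$, Schur's lemma reduces $C$ to a few positive blocks. These blocks are indexed by total angular momentum on the $PI^3$ side and on the $O^3F$ side, and act on the multiplicity spaces. The parallel no-signalling constraint ${}_{O^3F}C$ being a valid input state becomes a handful of linear trace relations among the blocks. The objective $\int \mathrm{d}U\,\tr[J^{\mathcal{E}_U}\dketbra{U}{U}]/4$ is computed by expanding $(J^{\cN_\gamma})^{\otimes 3}$ in powers $(1-\gamma)^k(\gamma/2)^{3-k}$, $k=0,\dots,3$, according to how many of the three uses retain the unitary. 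Each term involves $U^{\otimes (k+1)}\otimes \bar U^{\otimes (k+1)}$ averages, which are evaluated by Weingarten/Schur–Weyl calculus for $d=2$.

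\emph{Solving the small SDP.} The expected outcome mirrors the 2-slot case. The problem collapses to maximizing a quadratic form $\vec v^T M_\gamma\vec v$ over a normalized vector whose entries are square roots of block weights. The relevant space is now the spin-$1/2$ multiplicity sector, which is two-dimensional, plus the spin-$3/2$ sector. The $\sqrt2$ in Eq.~\eqref{eq:fidelity} suggests that the maximal eigenvalue of a $2\times2$ (or $3\times3$) matrix with entries linear in $\gamma$ produces it, via coefficients like $\sqrt{2}$ from Clebsch–Gordan recoupling. The factor $(2-\gamma)$ should arise from the identity component. Matching the upper bound requires a dual feasible certificate, i.e. an operator $Y$ respecting the parallel dual constraints with $\Omega\le I_{O^3F}\otimes Y_{PI^3}$, built from the same blocks. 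I expect this step, verifying the exact optimum via the dual and getting the Clebsch–Gordan bookkeeping right, to be the main obstacle. I would first solve the SDP numerically to identify the optimal block pattern, then fit an analytic primal/dual pair.

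\emph{Nontriviality, universality and circuit.} Because the twirled optimal comb is covariant, its output channel $\cE_U$ equals $\cU\circ\mathcal{D}$ for a fixed channel $\mathcal{D}$ independent of $U$. Hence the fidelity is the same for every $U$ and equals the average $F_{\max}^{\mathrm{Par}}$. It then suffices to check that $F_{\max}^{\mathrm{Par}}(3;\cN_\gamma)>1-\tfrac34\gamma$ for all $\gamma\in(0,1)$. Expanding Eq.~\eqref{eq:fidelity}, the difference is a cubic in $\gamma$ that vanishes at $\gamma=0$ with zero linear term, so it remains to verify its sign on $(0,1)$. Finally, the optimal block solution is translated into a circuit. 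The encoder maps the input qubit isometrically into the spin-$1/2$ subspace of three qubits, with the multiplicity qubit entangled with an ancilla memory. The decoder projects $O^3$ onto the spin-$1/2$ sector and recombines the result with the ancilla, so that the ancilla absorbs errors.
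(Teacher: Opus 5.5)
\textbf{Main gap: the exact value is never established.} Your architecture matches the paper's proof: twirling under $V_P\otimes V^{*\otimes3}_{I^3}\otimes W^{*\otimes3}_{O^3}\otimes W_F$, Schur's lemma into the nine sectors, an SDP over multiplicity blocks, a dual certificate for the upper bound, an explicit comb for achievability, and $\cE_U=\cU\circ\mathcal{D}$ for universality. But the step that carries the theorem is missing. You give neither a feasible strategy attaining Eq.~\eqref{eq:fidelity} nor a certificate bounding the optimum by it; both are deferred to ``solve numerically, then fit.''

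Your expectation that the problem collapses to a $2\times2$ or $3\times3$ quadratic form, as in the 2-slot case, is unfounded. The paper first imposes $S_3$ slot-permutation symmetry, which is allowed because $\Omega$ is slot-symmetric and the parallel set is closed under slot permutations. It then relaxes the parallel constraints to those of one fixed sequential order, which is valid since every parallel comb is sequential. Even after both reductions the blocks form a $25$-parameter affine family ($196$ real variables, $171$ independent equalities), not ``a handful of trace relations.'' The bound is certified by explicit Hermitian $Y_0\in\CC^{4\times4}$ and $Y_1\in\CC^{6\times6}$ with cubic-in-$\gamma$ entries involving $\sqrt2$ and $\sqrt3$. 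These satisfy $Y_0-W_0\succeq0$ and $Y_1-W_1\succeq0$, and the identity $F_{\mathrm{ub}}-\sum_k\tr(W_kH_k)=\tr[(Y_0-W_0)H_0]+\tr[(Y_1-W_1)H_1]$ holds identically on that affine slice. Achievability is a concrete encoder/decoder with a one-qubit memory attaining $F_{\mathrm{ub}}$. Without objects of this kind, neither inequality in $F^{\mathrm{Par}}_{\max}(3;\cN_\gamma)=F_{\mathrm{ub}}$ is proved. Your nontriviality claim rests on that value, so it is not established either.

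\textbf{Further points to fix.} First, the dual you describe, $\Omega\le I_{O^3F}\otimes Y_{PI^3}$, is not the dual of the parallel-comb SDP. Write the primal constraints as $\tr_F C=I_{O^3}\otimes\sigma_{PI^3}$ and $\tr_{I^3}\sigma=I_P$. Then the dual variable must have the form $I_F\otimes Z_{PI^3O^3}$ with $\tr_{O^3}Z=I_{I^3}\otimes Y_P$, which gives $\tr[C\Omega]\le\tr Y_P$. Your form depends only on the $PI^3$ marginal of $C$. The best bound it can give is therefore the optimum over all positive operators with that marginal. That set allows signalling from $O^3$ and need not share the parallel optimum.

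Second, the fidelity gain does not have zero linear term. From Eq.~\eqref{eq:fidelity},
\[
F^{\mathrm{Par}}_{\max}(3;\cN_\gamma)-\Bigl(1-\tfrac34\gamma\Bigr)=\frac{\gamma(1-\gamma)}{24}\Bigl[(16\sqrt2-12)-(8\sqrt2-3)\gamma\Bigr].
\]
The bracket exceeds $8\sqrt2-9>0$ on $(0,1)$, so positivity is immediate once the value is known. The linear coefficient $(4\sqrt2-3)/6$ is strictly positive: the optimal comb lowers the first-order infidelity from $\tfrac34\gamma$ to $\tfrac{15-8\sqrt2}{12}\gamma$. Any circuit you propose has to reproduce exactly this behaviour. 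For example, ``project $O^3$ onto the spin-$1/2$ sector'' must also specify the action on the spin-$3/2$ branch to be trace preserving.
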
 

\noindent\underline{\emph{Proof sketch.}}
Following a similar rationale to the first proof sketch, we begin by exploiting the unitary invariance inherent in the average fidelity objective. The symmetry of the underlying performance operator naturally translates to the optimization variable, allowing us to enforce invariance under the group action
\begin{equation}
    \check{C} \;\mapsto\; U_{V,W}\,\check{C}\,U_{V,W}^\dagger,\quad V,W\in\mathrm{SU}(2),
\end{equation}
where $U_{V,W} = V_P\otimes V^{*\otimes 3}_{I^3}\otimes W^{*\otimes 3}_{O^3}\otimes W_F$. Analogously to Eq.~\eqref{eq:twirl1}, we twirl the Choi operator over the Haar measure to obtain a symmetrized strategy $\bar{C}$ that commutes with $U_{V,W}$ for all $V,W\in\mathrm{SU}(2)$, while preserving both validity and the achieved fidelity.

We further observe that the performance operator of the optimization objective is invariant under any permutation $\pi\in S_3$ of the slots $(I_k,O_k)$. Since the feasible set of parallel strategies is closed under this slot permutation, the parallel optimum can be attained by an $S_3$-symmetric comb. Denoting the unitary representation of $\pi$ by $P_\pi$, we may therefore enforce $\bar{C} = P_\pi\,\bar{C}\,P_\pi^\dagger$ for all $\pi\in S_3$ without loss of generality for the parallel optimization.

Next, we apply the Clebsch-Gordan decomposition to the representation $(V,W)\mapsto V\otimes V^{*\otimes 3}\otimes W^{*\otimes 3}\otimes W$ of $\mathrm{SU}(2)\times\mathrm{SU}(2)$, which decomposes into nine irreducible sectors. By Schur's lemma, the commutation relation $[\bar{C},U_{V,W}]=0$ forces $\bar{C}$ to be block-diagonal in the corresponding irreducible decomposition basis. Specifically, there exists a unitary $G\in\mathrm{SU}(256)$ such that
\begin{equation}
    G^\dagger\,\bar{C}\,G = \bigoplus_{k=0}^{8}\bigl(H_k\otimes I_{d_k}\bigr),
\end{equation}
where $H_k$ are positive semidefinite matrices and $d_k$ are the multiplicities of each sector.

This structure, combined with the $S_3$ symmetry and a relaxation of the parallel causality constraints to one fixed sequential order, gives a finite SDP upper bound over the blocks $\{H_k\}$. We then construct an analytic certificate on the full affine constraint system whose value is exactly Eq.~\eqref{eq:fidelity}, certifying the corresponding upper bound for the original parallel class. For achievability, we construct an explicit parallel 3-slot strategy.  An encoder isometry \(V_{\mathrm{enc}}\) maps the input system \(P\) to the three channel inputs \(I_1,I_2,I_3\) and a one-qubit memory register \(R\).  The three uses of \(\cN_\gamma\circ\cU\) are then applied in parallel, producing outputs \(O_1,O_2,O_3\).  A decoder isometry \(V_{\mathrm{dec}}\) processes \(R,O_1,O_2,O_3\), outputs a single system qubit \(F\), and discards four ancillary output qubits\footnote{The explicit isometries of encoder and decoder can be found at our GitHub repository \url{https://github.com/jyzhau/Unitary-Purification}.}.  This finite-dimensional construction, illustrated in Fig.~\ref{fig:par}, attains the value in Eq.~\eqref{eq:fidelity}.  We also provide the circuit implementation in Appendix~\ref{sec:cir}. Together, these two arguments establish the expression in Eq.~\eqref{eq:fidelity} as the exact value of $F_{\max}^{\mathrm{Par}}(3;\cN_\gamma)$.
\hfill$\square$

\begin{figure}[t]
    \centering
    \scalebox{0.75}{\begin{tikzpicture}[scale=1.452]
	\begin{pgfonlayer}{nodelayer}
		\node [style=none] (220) at (48.5, 8.75) {};
		\node [style=none] (221) at (50.25, 8.75) {};
		\node [style=none] (222) at (50.25, 1.75) {};
		\node [style=none] (223) at (48.5, 1.75) {};
		\node [style=none] (224) at (58.5, 8.75) {};
		\node [style=none] (225) at (60.25, 8.75) {};
		\node [style=none] (226) at (60.25, 1.75) {};
		\node [style=none] (227) at (58.5, 1.75) {};
		\node [style=none] (228) at (54.75, 8.75) {};
		\node [style=none] (229) at (56.5, 8.75) {};
		\node [style=none] (230) at (56.5, 7.75) {};
		\node [style=none] (231) at (54.75, 7.75) {};
		\node [style=none] (232) at (54.75, 6.75) {};
		\node [style=none] (233) at (56.5, 6.75) {};
		\node [style=none] (234) at (56.5, 5.75) {};
		\node [style=none] (235) at (54.75, 5.75) {};
		\node [style=none] (236) at (54.75, 4.75) {};
		\node [style=none] (237) at (56.5, 4.75) {};
		\node [style=none] (238) at (56.5, 3.75) {};
		\node [style=none] (239) at (54.75, 3.75) {};
		\node [style=none] (240) at (52, 8.75) {};
		\node [style=none] (241) at (53.5, 8.75) {};
		\node [style=none] (242) at (53.5, 7.75) {};
		\node [style=none] (243) at (52, 7.75) {};
		\node [style=none] (244) at (52, 6.75) {};
		\node [style=none] (245) at (53.5, 6.75) {};
		\node [style=none] (246) at (53.5, 5.75) {};
		\node [style=none] (247) at (52, 5.75) {};
		\node [style=none] (248) at (52, 4.75) {};
		\node [style=none] (249) at (53.5, 4.75) {};
		\node [style=none] (250) at (53.5, 3.75) {};
		\node [style=none] (251) at (52, 3.75) {};
		\node [style=none] (252) at (52.75, 8.25) {$\mathcal{U}$};
		\node [style=none] (253) at (52.75, 6.25) {$\mathcal{U}$};
		\node [style=none] (254) at (52.75, 4.25) {$\mathcal{U}$};
		\node [style=none] (255) at (48, 5.5) {};
		\node [style=none] (256) at (48.5, 5.5) {};
		\node [style=none] (257) at (50.25, 8.25) {};
		\node [style=none] (258) at (52, 8.25) {};
		\node [style=none] (259) at (50.25, 6.25) {};
		\node [style=none] (260) at (52, 6.25) {};
		\node [style=none] (261) at (50.25, 4.25) {};
		\node [style=none] (262) at (52, 4.25) {};
		\node [style=none] (263) at (56.5, 8.25) {};
		\node [style=none] (264) at (58.5, 8.25) {};
		\node [style=none] (265) at (56.5, 6.25) {};
		\node [style=none] (266) at (58.5, 6.25) {};
		\node [style=none] (267) at (56.5, 4.25) {};
		\node [style=none] (268) at (58.5, 4.25) {};
		\node [style=none] (269) at (50.25, 2.25) {};
		\node [style=none] (270) at (58.5, 2.25) {};
		\node [style=none] (271) at (60.25, 7.75) {};
		\node [style=none] (272) at (61, 7.75) {};
		\node [style=none] (273) at (60.25, 6.25) {};
		\node [style=none] (274) at (61, 6.25) {};
		\node [style=none] (275) at (60.25, 5) {};
		\node [style=none] (276) at (61, 5) {};
		\node [style=none] (277) at (60.25, 3.75) {};
		\node [style=none] (278) at (61, 3.75) {};
		\node [style=none] (279) at (60.25, 2.5) {};
		\node [style=none] (280) at (61, 2.5) {};
		\node [style=none] (281) at (61, 6.75) {};
		\node [style=none] (282) at (61, 5.75) {};
		\node [style=none] (283) at (61.75, 6.75) {};
		\node [style=none] (284) at (61.75, 5.75) {};
		\node [style=none] (285) at (62.25, 6.25) {};
		\node [style=none] (286) at (61, 5.5) {};
		\node [style=none] (287) at (61, 4.5) {};
		\node [style=none] (288) at (61.75, 5.5) {};
		\node [style=none] (289) at (61.75, 4.5) {};
		\node [style=none] (290) at (62.25, 5) {};
		\node [style=none] (291) at (61, 4.25) {};
		\node [style=none] (292) at (61, 3.25) {};
		\node [style=none] (293) at (61.75, 4.25) {};
		\node [style=none] (294) at (61.75, 3.25) {};
		\node [style=none] (295) at (62.25, 3.75) {};
		\node [style=none] (296) at (61, 3) {};
		\node [style=none] (297) at (61, 2) {};
		\node [style=none] (298) at (61.75, 3) {};
		\node [style=none] (299) at (61.75, 2) {};
		\node [style=none] (300) at (62.25, 2.5) {};
		\node [style=none] (301) at (49.375, 5.5) {$V_{\mathrm{enc}}$};
		\node [style=none] (302) at (59.375, 5.5) {$V_{\mathrm{dec}}$};
		\node [style=none] (303) at (55.625, 8.25) {$\mathcal{N}_\gamma$};
		\node [style=none] (304) at (55.625, 6.25) {$\mathcal{N}_\gamma$};
		\node [style=none] (305) at (55.625, 4.25) {$\mathcal{N}_\gamma$};
		\node [style=none] (306) at (48, 6.1) {$P$};
		\node [style=none] (307) at (51.625, 8.55) {$I_1$};
		\node [style=none] (308) at (51.625, 6.55) {$I_2$};
		\node [style=none] (309) at (51.625, 4.55) {$I_3$};
		\node [style=none] (310) at (57.875, 8.55) {$O_1$};
		\node [style=none] (311) at (57.875, 6.55) {$O_2$};
		\node [style=none] (312) at (57.875, 4.55) {$O_3$};
		\node [style=none] (313) at (53.375, 1.2) {$R$};
		\node [style=none] (314) at (60.625, 8.05) {$F$};
		\node [style=none] (315) at (61.45, 6.25) {\scriptsize trash};
		\node [style=none] (316) at (61.45, 5) {\scriptsize trash};
		\node [style=none] (317) at (61.45, 3.75) {\scriptsize trash};
		\node [style=none] (318) at (61.45, 2.5) {\scriptsize trash};
		\node [style=none] (319) at (53.5, 8.25) {};
		\node [style=none] (320) at (54.75, 8.25) {};
		\node [style=none] (321) at (53.5, 6.25) {};
		\node [style=none] (322) at (54.75, 6.25) {};
		\node [style=none] (323) at (53.5, 4.25) {};
		\node [style=none] (324) at (54.75, 4.25) {};
	\end{pgfonlayer}
	\begin{pgfonlayer}{edgelayer}
		\draw (220.center) to (221.center);
		\draw (221.center) to (222.center);
		\draw (222.center) to (223.center);
		\draw (223.center) to (220.center);
		\draw (224.center) to (225.center);
		\draw (225.center) to (226.center);
		\draw (226.center) to (227.center);
		\draw (227.center) to (224.center);
		\draw (228.center) to (229.center);
		\draw (229.center) to (230.center);
		\draw (230.center) to (231.center);
		\draw (231.center) to (228.center);
		\draw (232.center) to (233.center);
		\draw (233.center) to (234.center);
		\draw (234.center) to (235.center);
		\draw (235.center) to (232.center);
		\draw (236.center) to (237.center);
		\draw (237.center) to (238.center);
		\draw (238.center) to (239.center);
		\draw (239.center) to (236.center);
		\draw (240.center) to (241.center);
		\draw (241.center) to (242.center);
		\draw (242.center) to (243.center);
		\draw (243.center) to (240.center);
		\draw (244.center) to (245.center);
		\draw (245.center) to (246.center);
		\draw (246.center) to (247.center);
		\draw (247.center) to (244.center);
		\draw (248.center) to (249.center);
		\draw (249.center) to (250.center);
		\draw (250.center) to (251.center);
		\draw (251.center) to (248.center);
		\draw (281.center) to (283.center);
		\draw (283.center) to (285.center);
		\draw (285.center) to (284.center);
		\draw (284.center) to (282.center);
		\draw (282.center) to (281.center);
		\draw (286.center) to (288.center);
		\draw (288.center) to (290.center);
		\draw (290.center) to (289.center);
		\draw (289.center) to (287.center);
		\draw (287.center) to (286.center);
		\draw (291.center) to (293.center);
		\draw (293.center) to (295.center);
		\draw (295.center) to (294.center);
		\draw (294.center) to (292.center);
		\draw (292.center) to (291.center);
		\draw (296.center) to (298.center);
		\draw (298.center) to (300.center);
		\draw (300.center) to (299.center);
		\draw (299.center) to (297.center);
		\draw (297.center) to (296.center);
		\draw (255.center) to (256.center);
		\draw (257.center) to (258.center);
		\draw (259.center) to (260.center);
		\draw (261.center) to (262.center);
		\draw (263.center) to (264.center);
		\draw (265.center) to (266.center);
		\draw (267.center) to (268.center);
		\draw (269.center) to (270.center);
		\draw (271.center) to (272.center);
		\draw (273.center) to (274.center);
		\draw (275.center) to (276.center);
		\draw (277.center) to (278.center);
		\draw (279.center) to (280.center);
		\draw (319.center) to (320.center);
		\draw (321.center) to (322.center);
		\draw (323.center) to (324.center);
	\end{pgfonlayer}
\end{tikzpicture}}
    \caption{Schematic decomposition of the parallel 3-slot purification strategy.  The encoder isometry \(V_{\mathrm{enc}}\) maps the input system \(P\) to the three channel inputs \(I_1,I_2,I_3\) and a one-qubit memory register \(R\).  The three noisy channels \(\cN_\gamma\circ\cU\) are then applied in parallel, producing outputs \(O_1,O_2,O_3\).  The decoder isometry \(V_{\mathrm{dec}}\) processes \(R,O_1,O_2,O_3\) and outputs the final system qubit \(F\), while four ancillary qubits are traced out.}
    \label{fig:par}
\end{figure}

When $\gamma \in (0, 1)$, the value in Equation \eqref{eq:fidelity} is strictly greater than the fidelity achieved by a trivial strategy $F\!\left(J^{\cN\circ\cU},\, \dketbra{U}{U}\right)$, which is $1-\frac{3}{4}\gamma$. The details of this proof can be seen in Appendix~\ref{app:proof_universal_go}.

\section{Discussion}
\label{sec:discussion}
We have proved that no nontrivial $2$-slot indefinite causal order strategy can universally purify the set of single-qubit unitaries under depolarizing noise. For the minimal nontrivial case of a 3-slot, we present an analytical derivation of the achievable fidelity. This derivation serves as a rigorous proof that the 3-slot parallel strategy strictly outperforms the standard trivial strategy on the unitary purification task. Furthermore, we provide a concrete circuit construction for this scenario.

{While our current analysis primarily focuses on deterministic purification protocols, extending this framework to probabilistic cases represents a compelling direction for future research. Unlike the state case, the success probability in this context is governed by both the input noisy unitaries and the input state. Consequently, establishing a rigorous theoretical framework for probabilistic unitary purification and developing novel methodologies to analyze it remain important open problems.}
Furthermore, the scalability and composability of these quantum strategies remain intriguing open questions for future research. For instance, it is yet to be determined how the optimality of smaller strategies scales; specifically, whether concatenating two optimal $n$-slot strategies (e.g., $n=3$) inherently yields an optimal $2n$-slot strategy.

\section*{Acknowledgments}
This work was partially supported by the National Key R\&D Program of China (Grant No.~2024YFB4504004), the National Natural Science Foundation of China (Grant No.~12447107, 92576114), the Guangdong Provincial Quantum Science Strategic Initiative (Grant No.~GDZX2403008, GDZX2503001), the CCF-Tencent Rhino-Bird Open Research Fund, and the Guangdong Provincial Key Lab of Integrated Communication, Sensing and Computation for Ubiquitous Internet of Things (Grant No.~2023B1212010007). 


\newpage
\bibliographystyle{unsrt}
\bibliography{ref}

\clearpage
\appendices

\setcounter{subsection}{0}
\setcounter{table}{0}
\setcounter{figure}{0}


\begin{center}
\large{\textbf{Appendix for
`Distilling Unitary Operations: A No-Go Theorem and Minimal Realization'}}
\end{center}
\renewcommand{\theequation}{S\arabic{equation}}
\renewcommand{\thesubsection}{\normalsize{Supplementary Note \arabic{subsection}}}
\renewcommand{\theproposition}{S\arabic{proposition}}
\renewcommand{\thedefinition}{S\arabic{definition}}
\renewcommand{\thefigure}{S\arabic{figure}}
\setcounter{equation}{0}
\setcounter{table}{0}
\setcounter{section}{0}
\setcounter{proposition}{0}
\setcounter{definition}{0}
\setcounter{figure}{0}


\section{Proof of Theorem~\ref{thm:nogo}}\label{app:proof_universal_no-go}

We prove Theorem \ref{thm:nogo} by showing that for depolarizing noise with noise level $\gamma$, the optimal channel fidelity after purification is $1-\frac{3}{4}\gamma$, the same as that of a trivial purification.

\begin{theorem}\label{theo:no-go_universal}
For the $2$-dimensional depolarizing noise channel $\cN_{\gamma}$ with noise level $\gamma\in(0,1)$ and any $2$-slot ICO
strategy, we have
\begin{align}
    &\max_{\cC\in\{ICO\}}\mathbb E_{U}C_{P I^2 O^2 F}*(J^{\cN_{\gamma}}*\dketbra{U}{U})^{\otimes 2}*(\dketbra{U}{U})^T/4\nonumber\\
    =&1-\frac{3}{4}\gamma.\label{eq:no-go_max}
\end{align}
\end{theorem}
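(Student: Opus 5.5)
We prove the two inequalities in \eqref{eq:no-go_max} separately.

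For the lower bound we use the trivial strategy: discard the first slot, and connect $P$ to $I_2$ and $O_2$ to $F$ by identity wires. This is a valid parallel comb, hence a valid ICO strategy. Its output channel is $\cN_\gamma\circ\cU$, and its entanglement fidelity is $\tr[J^{\cN_\gamma}\dketbra{I}{I}]/4=(1-\gamma)+\gamma/4=1-\frac34\gamma$ for every $U$.

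For the upper bound we write the objective as $\tr[C\,\Omega]$, where
\[
\Omega=\mathbb E_U\bigl[(J^{\cN_\gamma\circ\cU})^{\otimes 2}_{I^2O^2}\otimes \dketbra{U}{U}^{T}_{PF}\bigr]/4 .
\]
Because the depolarizing noise is unitarily covariant, $\Omega$ is invariant under $U_{V,W}=V_P\otimes V^{*\otimes2}_{I^2}\otimes W^{*\otimes 2}_{O^2}\otimes W_F$. The ICO constraints in \eqref{eq:ico1} are covariant under local unitaries on each system, so twirling an optimal $C$ over $(V,W)$ keeps it feasible and does not change the objective. We may therefore assume $[C,U_{V,W}]=0$. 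We may also symmetrize over swapping the two slots, since $\Omega$ is swap-invariant and the ICO set is swap-closed.

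Next we apply Schur--Weyl and Clebsch--Gordan. The $V$-part $\tfrac12\otimes\tfrac12\otimes\tfrac12$ decomposes as $\tfrac32\oplus2\cdot\tfrac12$, and the $W$-part decomposes the same way. The commutant is therefore spanned by the blocks $H_{j,j'}$ indexed by irrep pairs, with multiplicity matrices of size $1$ or $2$. This gives the block form \eqref{eq:block_diag}. We can equivalently expand $C$ in the permutation basis, i.e.\ partial transposes of permutation operators on $PI_1I_2$ and on $O_1O_2F$, with $36$ real coefficients minus symmetries.

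The main step is to translate the ICO linear constraints, the trace normalization and positivity into conditions on these few block parameters, and to express $\tr[C\Omega]$ linearly in them. For this we compute the partial traces ${}_XC$ in the irrep basis. Tracing out $O_2F$, for example, maps the $W$-irreps to their restrictions, so each constraint becomes a linear relation among block traces. We expect the LP/SDP reduction to give $\tr[C\Omega]-(1-\frac34\gamma)$ equal to a quadratic form $\vec v^TM\vec v-\tfrac52(1-\gamma)$ with $\|\vec v\|^2\le\frac12$, after eliminating the variables that enter with a definite sign. This is the form already indicated in \eqref{eq:reduced_sdp}, and its maximum is $0$ since $\lambda_{\max}(M)=5(1-\gamma)$.

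We expect the hardest part to be this bookkeeping: obtaining the correct Clebsch--Gordan unitary $\tilde G$ and the images of all the ICO constraints. Our first attempt would avoid the bookkeeping altogether by exhibiting a dual certificate instead. The plan is to find an operator $Y=(1-\frac34\gamma)\,\frac{I_P}{2}\otimes(\cdots)$ lying in the dual affine cone of ICO combs with $Y\ge\Omega$. We would search for it numerically as a $U_{V,W}$-invariant operator with $\gamma$-linear coefficients, then verify positivity of $Y-\Omega$ blockwise. This certifies $\tr[C\Omega]\le1-\frac34\gamma$ for all ICO $C$, and combined with the trivial strategy gives equality.
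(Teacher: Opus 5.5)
This is essentially the paper's proof. The appendix carries out the step you only anticipate: the ICO and normalization constraints make $H_1$, $H_{211}$ and $h_3$ affine in $H_0$ and $H_{200}$, and positivity of these dependent blocks gives $H_{000}\le 3H_{033}$, $H_{022}\le 3H_{011}$ and $H_{011}+H_{033}\le\frac12$; after a pinching step, $H_{200}=0$ and saturation, this yields the quadratic form with $\lambda_{\max}(M)=5(1-\gamma)$. Your trivial-strategy argument usefully supplies the ``$\ge$'' half, which the appendix leaves implicit.

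There are two minor slips. First, the $(\frac12,\frac12)$ sector has multiplicity $4$, not $1$ or $2$; this is where the $4\times4$ block $H_0$ comes from, and the commutant is $16+4+4+1=25$-dimensional because the three-qubit antisymmetrizer vanishes. Second, the dual-certificate route does not actually avoid the bookkeeping, since checking dual feasibility and $Y\succeq\Omega$ requires the same block decomposition; this is how the paper handles the $3$-slot case.
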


\begin{proof}
Firstly, we find
\begin{equation}
    J^{\cN_{\gamma}}*\dketbra{U}{U}=(1-\gamma)\dketbra{U}{U}+\frac{\gamma}{2}I_4
\end{equation}
and
\begin{equation}
    J^{\cN_{\gamma}}*\dketbra{U}{U}*(\dketbra{U}{U})^T=4-3\gamma.
\end{equation}
Consider a $2$-slot general ICO strategy $\cC$. $C_{P I^2 O^2 F}*(J^{\cN_{\gamma}}*\dketbra{U}{U})^{\ox2}$ is a qubit channel and have trace $d=2$, so \eqref{eq:no-go_max} is equivalent to
\begin{align}
    &\max_{\cC}\mathbb E_{U}
    C_{P I^2 O^2 F}*(J^{\cN_{\gamma}}*\dketbra{U}{U})^{\otimes 2}*(\dketbra{U}{U})^T\nonumber\\
    -&\frac{4-3\gamma}{2}\tr\left[C_{P I^2 O^2 F}*(J^{\cN_{\gamma}}*\dketbra{U}{U})^{\ox2})\right]
    =0.\label{eq:no-go_max_2}
\end{align}

After rewriting the left hand of \eqref{eq:no-go_max_2} as
\begin{align*}
    &\max_{\cC}\mathbb E_{U}C_{P I^2 O^2 F}*(J^{\cN_{\gamma}}*\dketbra{U}{U})^{\otimes 2}*(\dketbra{U}{U})^T\\
    -&\frac{4-3\gamma}{2}\tr\left[C_{P I^2 O^2 F}*(J^{\cN_{\gamma}}*\dketbra{U}{U})^{\ox2})\right]\\
    =&\max_{\cC}\mathbb E_{U}C_{P I^2 O^2 F}*((1-\gamma)\dketbra{U}{U}+\frac{\gamma}{2}I_4)^{\otimes 2}*((\dketbra{U}{U})^T\\
    -&\frac{4-3\gamma}{2}I_4)\\
    =&\max_{\cC}\mathbb E_{U}\tr[C_{P I^2 O^2 F}\cdot(\left((1-\gamma)\dketbra{U}{U}+\frac{\gamma}{2}I_4\right)^{T\otimes 2}_{I^2 O^2}\\
    \ox&\left(\dketbra{U}{U}-\frac{4-3\gamma}{2}I_4\right)_{PF})]\\
    =&\max_{\cC}\mathbb E_{U}\tr[C_{P I^2 O^2 F}\cdot(\left((1-\gamma)\dketbra{U^*}{U^*}+\frac{\gamma}{2}I_4\right)^{\otimes 2}_{I^2 O^2}\\
    \ox&\left(\dketbra{U}{U}-\frac{4-3\gamma}{2}I_4\right)_{PF})]\\
    =&\max_{\cC}\tr[C_{P I^2 O^2 F}\cdot\mathbb E_{U}[\left((1-\gamma)\dketbra{U^*}{U^*}+\frac{\gamma}{2}I_4\right)^{\otimes 2}_{I^2 O^2}\\
    \ox&\left(\dketbra{U}{U}-\frac{4-3\gamma}{2}I_4\right)_{PF}]],
\end{align*}
we have another equivalent equation of \eqref{eq:no-go_max}:
\begin{align}
    &\max_{\cC}\tr[C_{P I^2 O^2 F}\cdot\mathbb E_{U}[\left((1-\gamma)\dketbra{U^*}{U^*}+\frac{\gamma}{2}I_4\right)^{\otimes 2}_{I^2 O^2}\nonumber\\
    \ox&\left(\dketbra{U}{U}-\frac{4-3\gamma}{2}I_4\right)_{PF}]]=0.\label{eq:no-go_max_3}
\end{align}
{\bf Suppose} that there exists a $2$-slot general strategy $\check{\cC}$ that satisfies
\begin{align}
    &\tr[\check{C}_{P I^2 O^2 F}\cdot\mathbb E_{U}[\left((1-\gamma)\dketbra{U^*}{U^*}+\frac{\gamma}{2}I_4\right)^{\otimes 2}_{I^2 O^2}\nonumber\\
    \ox&\left(\dketbra{U}{U}-\frac{4-3\gamma}{2}I_4\right)_{PF}]]>0.\label{eq:asumpt}
\end{align}
Notice that for any $V,W\in\operatorname{SU}(2)$, $V_P\ox V^{*\ox2}_{I_1 I_2}\ox W^{*\ox2}_{O_1 O_2}\ox W_F$ and 
\begin{align*}
    \mathbb E_{U}\left[\left((1-\gamma)\dketbra{U^*}{U^*}+\frac{\gamma}{2}I_4\right)^{\otimes 2}_{I^2 O^2}\ox\left(\dketbra{U}{U}-\frac{4-3\gamma}{2}I_4\right)_{PF}\right]
\end{align*}
commute. Then \eqref{eq:asumpt} still holds after replacing $\check{C}_{P I^2 O^2 F}$ with 
\begin{equation}
    (V_P\ox V^{*\ox2}_{I_1 I_2}\ox W^{*\ox2}_{O_1 O_2}\ox W_F)\cdot \check{C}_{P I^2 O^2 F}\cdot(V_P^\dag\ox V^{T\ox2}_{I_1 I_2}\ox W^{T\ox2}_{O_1 O_2}\ox W^\dag_F)
\end{equation}
for any $V,W\in\operatorname{SU}(2)$, or their expectation
\begin{align*}
    &\bar{C}_{P I^2 O^2 F}\\
    :=&\mathbb E_{V,W}(V_P\ox V^{*\ox2}_{I_1 I_2}\ox W^{*\ox2}_{O_1 O_2}\ox W_F)\\
    &\cdot \check{C}_{P I^2 O^2 F}\cdot(V_P^\dag\ox V^{T\ox2}_{I_1 I_2}\ox W^{T\ox2}_{O_1 O_2}\ox W^\dag_F).
\end{align*}
Thus we have obtained $\bar{C}_{P I^2 O^2 F}$ that satisfies
\begin{equation}\label{eq:asumpt_1}
\begin{cases}
    \begin{aligned}
        &\tr[\bar{C}_{P I^2 O^2 F}\cdot\mathbb E_{U}[\left((1-\gamma)\dketbra{U^*}{U^*}+\frac{\gamma}{2}I_4\right)^{\otimes 2}_{I^2 O^2}\\
        \ox&\left(\dketbra{U}{U}-\frac{4-3\gamma}{2}I_4\right)_{PF}]]>0
    \end{aligned}\\
    \left[V_P\ox V^{*\ox2}_{I_1 I_2}\ox W^{*\ox2}_{O_1 O_2}\ox W_F,\bar{C}_{P I^2 O^2 F}\right]=0
\end{cases}
\end{equation}
for all $V,W\in\operatorname{SU}(2)$. By the irreducible representation decomposition of 
\begin{equation}
    \operatorname{SU}(2)\rightarrow \operatorname{SU}(8):\ U\mapsto U^{*\ox 2}\ox U=2U\oplus f_{4}(U),
\end{equation}
there exist $G\in\operatorname{SU}(8)$ such that for all $U\in\operatorname{SU}(2)$
\begin{equation}
    \ G(U^{*\ox 2}\ox U)G^\dag=
    \begin{pmatrix}
        I_2\ox U\\&f_{4}(U)
    \end{pmatrix},
\end{equation}
where $f_4$ is a $4$-dimensional irreducible representation of $\operatorname{SU}(2)$. Moreover by the irreducible representation decomposition of $\operatorname{SU}(2) \times \operatorname{SU}(2) \rightarrow \operatorname{SU}(64)$, we have:
\begin{align*}
    (V,W) \mapsto &V\ox V^{*\ox 2}\ox W^{*\ox2}\ox W\\
    =&4V\ox W \oplus 2f_{4}(V)\ox W\oplus 2f_{4}(W)\ox V\oplus f_{4}(V)\ox f_4(W).
\end{align*}

The following computation is based on a specific $ G\in\operatorname{SU}(64)$. Applying a permutation matrix $P$ on the left and $P^T$ on the right could induce a permutation of the tensor factors, mapping $V \otimes V^{* \otimes 2} \otimes W^{* \otimes 2} \otimes W \mapsto V \otimes (V \otimes W)^{* \otimes 2} \otimes W$. We then set $\tilde{G} = G P$\footnote{The explicit unitary $\tilde{G}$ can be found at our GitHub repository \url{https://github.com/jyzhau/Unitary-Purification}.} s.t. $\forall\,V,W\in\operatorname{SU}(2)$,
\begin{align*}
    &\tilde G\cdot(V\ox V^{*\ox 2}\ox W^{*\ox2}\ox W)\cdot\tilde G^\dag\\
    =&\operatorname{diag}(I_4\ox V\ox W,I_2\ox V\ox f_{4}(W),\\
    &I_2\ox W\ox f_{4}(V),f_{4}(V)\ox f_{4}(W))
\end{align*}
Since $ \bar{C}_{P I^2 O^2 F}$ is always commutative with 
$V_P\ox V^{*\ox2}_{I_1 I_2}\ox W^{*\ox2}_{O_1 O_2}\ox W_F$
for any $V,W\in\operatorname{SU}(2)$, 
by Schur's lemma, $\bar{C}_{P I^2 O^2 F}$ takes the form
\begin{align*} 
    \tilde G\cdot \bar{C}_{P I^2 O^2 F}\cdot\tilde G^\dag
    =\begin{pmatrix}
        H_0\ox I_4\\&H_1\ox I_8\\&&H_2\ox I_8\\&&&h_3I_{16}
    \end{pmatrix}\\
    =\begin{pmatrix}
        H_0\\&H_1\ox I_2\\&&H_2\ox I_2\\&&&h_3I_{4}
    \end{pmatrix}\ox I_4,
\end{align*}
where $H_0=\{H_{0jk}\}_{jk}\in \mathbb C^{4\times4}$, $H_1=\{H_{1jk}\}_{jk},H_2=\{H_{2jk}\}_{jk}\in \mathbb C^{2\times2}$, $h_3\in \mathbb C$.
Denoting $\tr_d$ as the partial trace on the last $d$-dimensional system, and 
\begin{align*}
    &\tr_4[\tilde G\cdot\mathbb E_{U}[\left((1-\gamma)\dketbra{U^*}{U^*}+\frac{\gamma}{2}I_4\right)^{\otimes 2}_{I^2 O^2}\\
    \ox&\left(\dketbra{U}{U}-\frac{4-3\gamma}{2}I_4\right)_{PF}]\cdot\tilde G^\dag]\\
    =&\sum_{jk=0}^3\ketbra{j}{k}\ox \Omega_{jk},
\end{align*}
we have
\begin{align*}
    &\tr[\bar{C}_{P I^2 O^2 F}\cdot\mathbb E_{U}[\left((1-\gamma)\dketbra{U^*}{U^*}+\frac{\gamma}{2}I_4\right)^{\otimes 2}_{I^2 O^2}\\
    \ox&\left(\dketbra{U}{U}-\frac{4-3\gamma}{2}I_4\right)_{PF}]] \\
    =&\tr[H_0\Omega_{00}]+\tr[H_1\tr_2[\Omega_{11}]]+\tr[H_2\tr_2[\Omega_{22}]]+h_3\tr[\Omega_{33}],
\end{align*}
where
$\Omega_{00,11,22,33}$ are matrices depending only on $\gamma$.


An indefinite causal order (ICO) strategy $\bar{\cC}$ is valid if and only if its corresponding Choi operator $\bar{C}$ on the joint system $P I^2 O^2 F$ satisfies the following conditions:
\begin{equation}
\begin{cases}
    \bar{C} \ge 0 \\
    {}_{I_1 O_1 F} \bar{C} = {}_{I_1 O_1 O_2 F} \bar{C} \\
    {}_{I_2 O_2 F} \bar{C} = {}_{O_1 I_2 O_2 F} \bar{C} \\
    {}_{I^2 O^2 F} \bar{C} = {}_{P I^2 O^2 F} \bar{C} \\
    {}_{F} \bar{C} + {}_{O_1 O_2 F} \bar{C} = {}_{O_1 F} \bar{C} + {}_{O_2 F} \bar{C} \\
    \tr[\bar{C}] = 8
\end{cases}
\end{equation}
which is equivalent to
\begin{equation}
\begin{cases}
    H_0,H_1,H_2,h_3\ge0\\   
    H_{100}=\frac{1}{2} \left(3 H_{033}-H_{000}\right)\\
    H_{101}=\frac{1}{2} \left(3 H_{013}^*-H_{002}\right)\\
    H_{111}=\frac{1}{2} \left(3 H_{011}-H_{022}\right)\\
    H_{211}=\frac{1}{4} \left(-2 H_{011}-2 H_{033}+1\right)\\
    h_3=\frac{1}{8} \left(-6 H_{011}-6 H_{033}-4 H_{200}+3\right)
\end{cases}
\end{equation}
We have the original SDP
is equivalent to the following:
\begin{align*}
    \max\; & \tr[H_0\Omega_{00}]+\tr[H_1\tr_2[\Omega_{11}]]+\tr[H_2\tr_2[\Omega_{22}]]+h_3\tr[\Omega_{33}]\\
    \text{s.t.}\; 
    &H_0,H_1,H_2,h_3\ge0\\   
    &H_{100}=\frac{1}{2} \left(3 H_{033}-H_{000}\right)\\
    &H_{101}=\frac{1}{2} \left(3 H_{013}^*-H_{002}\right)\\
    &H_{111}=\frac{1}{2} \left(3 H_{011}-H_{022}\right)\\
    &H_{211}=\frac{1}{4} \left(-2 H_{011}-2H_{033}+1\right)\\
    &h_3=\frac{1}{8} \left(-6 H_{011}-6 H_{033}-4 H_{200}+3\right)
\end{align*}
which can be further expanded as
\begin{equation}\label{eq:sdp_7+}
\begin{aligned}
    \max\; & \tr\left[\begin{pmatrix}
         H_{000} & H_{001} \\H_{001}^* & H_{011}
    \end{pmatrix}\begin{pmatrix}
        2 (1-\gamma)  & -(1-\gamma)  (2-\gamma) \\ -(1-\gamma)  (2-\gamma) & -(1-\gamma)
    \end{pmatrix}\right]\\
    +&\tr\left[\begin{pmatrix}
         H_{022} & H_{023} \\ H_{023}^* & H_{033}
    \end{pmatrix}\begin{pmatrix}
        0 & (1-\gamma)\gamma \\ (1-\gamma)\gamma & -3 (1-\gamma)
    \end{pmatrix}\right]\\
    -&\frac{1}{2}(1-\gamma) \left(4 H_{200}+5\right)\\
    \text{s.t.}\; &H_0\ge0\\  
    &\begin{pmatrix}
         \frac{1}{2} \left(3 H_{033}-H_{000}\right) & \frac{1}{2} \left(3 H_{013}^*-H_{002}\right) \\
         \frac{1}{2} \left(3 H_{013}^*-H_{002}\right)^* & \frac{1}{2} \left(3 H_{011}-H_{022}\right)
    \end{pmatrix}\ge0\\
    &\begin{pmatrix}
         H_{200} & H_{201} \\ 
         H_{201}^* & \frac{1}{4} \left(-2 H_{011}-2H_{033}+1\right)
    \end{pmatrix}\ge0\\
    &-6 H_{011}-6 H_{033}-4 H_{200}+3\ge0.
\end{aligned}
\end{equation}
For any optimal solution of \eqref{eq:sdp_7+}, maintaining the values of $\{H_{000},H_{001},H_{011},H_{022},H_{023},H_{033},H_{200}\}$ and replacing the values of any other variables (that is, in $\{H_{002},H_{003},H_{012},H_{013},H_{201}\}$) by $0$, we will obtain another optimal solution because it is an easily checked feasible solution with the same objective function value with the optimal solution of the origin. 
The new optimal solution could be described by the following SDP:
\begin{equation}\label{eq:sdp_7}
\begin{aligned}
    \max\; & \tr\left[\begin{pmatrix}
         H_{000} & H_{001} \\ H_{001}^* & H_{011}
    \end{pmatrix}\begin{pmatrix}
        2 (1-\gamma)  & -(1-\gamma)  (2-\gamma) \\
        -(1-\gamma)  (2-\gamma) & -(1-\gamma)  \\
    \end{pmatrix}\right]\\
    +&\tr\left[\begin{pmatrix}
         H_{022} & H_{023} \\ H_{023}^* & H_{033}
    \end{pmatrix}\begin{pmatrix}
        0 & (1-\gamma)\gamma \\ (1-\gamma)\gamma & -3 (1-\gamma)  \\
    \end{pmatrix}\right]\\
    -&\frac{1}{2}(1-\gamma) \left(4 H_{200}+5\right)\\
    \text{s.t.}\; 
    &\begin{pmatrix}
         H_{000} & H_{001} \\ H_{001}^* & H_{011}
    \end{pmatrix},\begin{pmatrix}
         H_{022} & H_{023} \\ H_{023}^* & H_{033}
    \end{pmatrix}\ge0\\  
    &3 H_{033}-H_{000},3 H_{011}-H_{022}\ge0\\
    &H_{200},-2 H_{011}-2H_{033}+1\ge0\\
    &-6 H_{011}-6 H_{033}-4 H_{200}+3\ge0.
    \end{aligned}
\end{equation}
Firstly, we could find the above SDP reaches its maximum only if $H_{200}=0$, and simplify it as
 
\begin{equation}\label{eq:sdp_6}
    \begin{aligned}
    \max\; & \tr[\left(
        \begin{array}{cc}
         H_{000} & H_{001} \\
         H_{001}^* & H_{011}
        \end{array}
    \right)\\
    \cdot&\left(
\begin{array}{cc}
 2 (1-\gamma)  & -(1-\gamma)  (2-\gamma) \\
 -(1-\gamma)  (2-\gamma) & -(1-\gamma)  \\
\end{array}
\right)]\\
    +&\tr\left[\left(
        \begin{array}{cc}
         H_{022} & H_{023} \\
         H_{023}^* & H_{033}
        \end{array}
    \right)\left(
\begin{array}{cc}
 0 & (1-\gamma)\gamma \\
 (1-\gamma)\gamma & -3 (1-\gamma)  \\
\end{array}
\right)\right]\\
-&\frac{5}{2}(1-\gamma)\\
    \text{s.t.}\; &\left(
        \begin{array}{cc}
         H_{000} & H_{001} \\
         H_{001}^* & H_{011}
        \end{array}
    \right),\left(
        \begin{array}{cc}
         H_{022} & H_{023} \\
         H_{023}^* & H_{033}
        \end{array}
    \right)\ge0\\  
    &3 H_{033}-H_{000},3 H_{011}-H_{022},-2 H_{011}-2H_{033}+1\ge0.
    \end{aligned}
\end{equation}
Secondly, due to the constraints of \eqref{eq:sdp_6}, we have at least
$H_{000},\, H_{011},\, H_{022},\, H_{033} \geq 0$
and $H_{011} + H_{033} \leq \tfrac{1}{2}$,
hence $H_{011},\, H_{033} \in [0, \tfrac{1}{2}]$.
By the remaining constraints, we further obtain
$H_{000},\, H_{022} \in [0, \tfrac{3}{2}]$.
Moreover, the positive semidefiniteness conditions imposed by the constraints imply
that the feasible region of the SDP is a bounded closed set,
and the objective function is non-constant;
therefore, any optimal solution must be attained on the boundary.
It follows that
$
|H_{001}|^2 = H_{000} \cdot H_{011}.
$
Since we seek the maximum value and $-(1-\gamma)(2-\gamma) < 0$ for $\gamma \in (0,1)$,
we choose
$
H_{001} = -\sqrt{H_{000} \cdot H_{011}}.
$ 

It further holds that the above programming reaches its maximum  only if $H_{000}=3H_{033}$,$H_{022}=3 H_{011}$.

Since the maximal eigenvalue of $\left(\begin{array}{cc}
 -(1-\gamma) & 2 \sqrt{3}(1-\gamma) \\
 2\sqrt{3}(1-\gamma) & 3(1-\gamma) \\
\end{array}\right)$ is $5(1-\gamma)$ with eigenstate $(\frac{1}{2},\frac{\sqrt{3}}{2})^T$, \eqref{eq:sdp_6} reaches is maximum $0$ if and only if $(\sqrt{H_{011}},\sqrt{H_{033}})=\frac{1}{\sqrt{2}}(\frac{1}{2},\frac{\sqrt{3}}{2})$.

As a result, the optimal value of original SDP is $0$, which contradicts the existence of $\bar{\cC}$ in \eqref{eq:asumpt_1}. Thus, the origin assumption \eqref{eq:asumpt} is false, which leads to
\begin{equation}
    \max_{\cC\in\{ICO\}}\mathbb E_{U}C_{P I^2 O^2 F}*(J^{\cN_{\gamma}}*\dketbra{U}{U})^{\otimes 2}*\dketbra{U}{U}^T/4\le1-\frac{3}{4}\gamma.
\end{equation}
\end{proof}

\section{Proof of Theorem~\ref{thm:go}}\label{app:proof_universal_go}

\begin{theorem}[Go for $3$-slot purification]
\label{thm:go2}
For all $\gamma\in(0,1)$, the depolarizing channel $\cN_{\gamma}$ admits a nontrivial $3$-slot parallel strategy satisfying the universal purification
condition~\eqref{eq:purification_condition}.
Furthermore, the optimal average fidelity in the parallel $3$-slot class is
\begin{equation}\label{eq:fideopt}
    F_{\max}^{\mathrm{Par}}(3;\cN_\gamma)
    =
    \frac{1}{24}(2 - \gamma)\bigl[12 + (8\sqrt{2} - 9)\gamma + (3 - 8\sqrt{2})\gamma^2\bigr],
\end{equation}
where $\gamma \in (0, 1)$.
\end{theorem}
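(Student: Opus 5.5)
The proof has two halves: an upper bound on $F_{\max}^{\mathrm{Par}}(3;\cN_\gamma)$ obtained by symmetry reduction and SDP duality, and a matching lower bound from an explicit parallel comb. Nontriviality then follows because the value in \eqref{eq:fideopt} strictly exceeds $1-\frac34\gamma$ on $(0,1)$.

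\textbf{Setting up the objective.} As in the proof of Theorem~\ref{theo:no-go_universal}, I will use $J^{\cN_\gamma}*\dketbra{U}{U}=(1-\gamma)\dketbra{U}{U}+\frac{\gamma}{2}I_4$. This lets me write
\begin{equation*}
F_{\mathrm{ave}}=\tr[C\,\Omega_3]/4,
\end{equation*}
with
\begin{equation*}
\Omega_3=\mathbb E_U\bigl[((1-\gamma)\dketbra{U^*}{U^*}+\tfrac{\gamma}{2}I_4)^{\otimes 3}_{I^3O^3}\otimes\dketbra{U}{U}_{PF}\bigr].
\end{equation*}
Expanding the tensor power, $\Omega_3$ is a polynomial in $\gamma$ whose coefficients are Haar moments $\mathbb E_U[(U^{*}\otimes U)^{\otimes k}]$ with $k\le 4$. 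Each such moment is a combination of permutation operators by Weingarten calculus for $d=2$.

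\textbf{Symmetry reduction.} I will twirl by $U_{V,W}=V_P\otimes V^{*\otimes3}_{I^3}\otimes W^{*\otimes3}_{O^3}\otimes W_F$ and average over $S_3$ acting on the slot pairs $(I_k,O_k)$. Both operations preserve the parallel constraints and the objective, so the optimum is attained by an invariant $C$.

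The $SU(2)$ decomposition $V\otimes V^{*\otimes3}\cong 2\cdot\mathbf 1\oplus 3\cdot\mathbf 3\oplus\mathbf 5$ (dimensions $1,3,5$, with $2\cdot1+3\cdot3+5=16$) gives nine sectors of $SU(2)\times SU(2)$ labelled by pairs of spins $(j,j')\in\{0,1,2\}^2$. By Schur's lemma, $C=\bigoplus_k H_k\otimes I_{d_k}$. The $S_3$ symmetry further restricts the multiplicity spaces to $S_3$-invariant blocks, i.e.\ to the symmetric and standard components.

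For the upper bound I will relax the parallel constraints to the sequential order $1\to2\to3$, since $\mathrm{Par}\subseteq\mathrm{Seq}$. Pushed through $\tilde G$, these become linear relations among the entries of the $H_k$, as in the two-slot case. This yields a small SDP in a few dozen real variables.

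\textbf{Upper bound via dual certificate.} I will solve this SDP numerically for several $\gamma$ and read off the support of the optimal blocks. The appearance of $\sqrt2$ suggests that a $2\times2$ block with an off-diagonal $\sqrt2$ entry is active, with maximal eigenvalue involving $1+\sqrt2$. I will then guess closed-form dual variables polynomial in $\gamma$ and $\sqrt2$: multipliers for the affine constraints together with PSD slacks $Z_k\ge0$ satisfying $\Omega$-block $=\sum(\text{constraint adjoints})-Z_k$. Weak duality then bounds the primal by the dual value, which should equal $4F$ with $F$ as in \eqref{eq:fideopt}.

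\textbf{Main obstacle.} I expect the hardest step to be finding these analytic dual variables and checking $Z_k\ge0$ for every $\gamma\in(0,1)$, not just at sample points. My first attempt will be to make $Z_k$ rank-deficient along the primal optimal vectors, using complementary slackness. Positivity then reduces to checking a few $2\times2$ determinants that are polynomials in $\gamma$.

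\textbf{Achievability.} I will take the encoder/decoder pair $V_{\mathrm{enc}},V_{\mathrm{dec}}$ of Fig.~\ref{fig:par}. Since this comb is equivalently given by the symmetric optimal primal $C^\star$ reconstructed from the blocks, it suffices to verify that $C^\star\ge0$ and satisfies the parallel constraints. These constraints are ${}_{O^3F}C={}_{I^3O^3F}C$ up to normalisation, i.e.\ $\tr_F C=\rho_{PI^3}\otimes I_{O^3}$ with $\tr_{I^3}\rho=I_P$. I will then check $\tr[C^\star\Omega_3]/4$ equals \eqref{eq:fideopt}. This shows the sequential relaxation is tight on parallel combs, so upper and lower bounds coincide.

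\textbf{Nontriviality.} Let $\Delta(\gamma)$ be the value in \eqref{eq:fideopt} minus $1+\frac34\gamma$. Expanding the product gives
\begin{equation*}
24\Delta(\gamma)=\gamma(1-\gamma)(8\sqrt2-3)(2-\gamma)>0 \quad\text{for }\gamma\in(0,1).
\end{equation*}
Because the optimal comb is $SU(2)\times SU(2)$-covariant, its fidelity is the same for every $U$. Hence the purification condition~\eqref{eq:purification_condition} holds strictly for all $U$.
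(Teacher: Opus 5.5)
Your route is essentially the paper's. You twirl by $U_{V,W}$, impose $S_3$ slot symmetry, relax to the order $1\to2\to3$, bound the block SDP with an analytic dual certificate, and finish with an explicit parallel comb plus the covariance argument for universality. For the certificate, the paper's dual lives only on the first two Schur blocks ($Y_0-W_0\succeq0$, $Y_1-W_1\succeq0$), and all other blocks are absorbed exactly by the affine constraints.

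\textbf{Error in the nontriviality step.} You should subtract $1-\frac34\gamma$, not $1+\frac34\gamma$. Your factorization is also wrong; expanding correctly gives
\[
24\Delta(\gamma)=(16\sqrt2-12)\gamma+(15-24\sqrt2)\gamma^2+(8\sqrt2-3)\gamma^3=\gamma(1-\gamma)\bigl[(16\sqrt2-12)-(8\sqrt2-3)\gamma\bigr].
\]
Your expression $\gamma(1-\gamma)(8\sqrt2-3)(2-\gamma)$ has $16\sqrt2-6$ as the constant in the bracket; at $\gamma=\tfrac12$ it gives about $3.12$ instead of about $1.62$. The conclusion survives, because the correct bracket is at least $8\sqrt2-9>0$ on $(0,1)$.

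\textbf{Smaller points.}

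First, identifying the circuit of Fig.~\ref{fig:par} with your symmetric optimizer $C^\star$ is unjustified, since optima need not be unique. It is also unnecessary, because any parallel comb attaining the bound suffices.

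Second, your check that $C^\star$ is parallel cannot fail. The correct parallel conditions are ${}_FC={}_{O^3F}C$ and ${}_{I^3O^3F}C={}_{PI^3O^3F}C$, which is what your ``i.e.'' states; the expression ${}_{O^3F}C={}_{I^3O^3F}C$ is not one of them. Conjugating ${}_FC={}_{O_3F}C$ by the slot permutations gives ${}_FC={}_{O_kF}C$ for every $k$. Since the projections ${}_{O_k}$ commute, this yields ${}_FC={}_{O^3F}C$. So the $S_3$-symmetric sequential feasible set equals the $S_3$-symmetric parallel one, and the ``relaxation'' is exact.

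Third, $S_3$-invariance does not restrict the multiplicity spaces to symmetric and standard components. It makes each $H_k$ commute with the $S_3$ action, and sign components do occur. For example, the $4$-dimensional multiplicity space of the spin-$(0,0)$ sector carries $\mathrm{std}\otimes\mathrm{std}=\mathrm{triv}\oplus\mathrm{sign}\oplus\mathrm{std}$.

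What your plan still lacks, as you note, is the closed-form certificate and a closed-form $C^\star(\gamma)$ valid for all $\gamma\in(0,1)$. The paper supplies the former explicitly and the latter as an exact-angle circuit.
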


\begin{proof}
First, we prove the optimality of Eq.~\eqref{eq:fideopt}. Suppose that there exists a $3$-slot parallel strategy $\check{\cC}$ satisfying
\begin{equation}\label{eq:asumpt2}
    \begin{aligned}
        &\tr[\check{C}_{P I^3 O^3 F}\cdot\mathbb E_{U}[\left((1-\gamma)\dketbra{U^*}{U^*}+\frac{\gamma}{2}I_4\right)^{\otimes 3}_{I^3 O^3}\\
\ox&\left(\dketbra{U}{U}\right)_{PF}]] = F    \end{aligned},
\end{equation}
for some $F\in[0,1]$. Considering any $V,W\in\operatorname{SU}(2)$ such that:$V_P\ox V^{*\ox3}_{I_1 I_2 I_3}\ox W^{*\ox3}_{O_1 O_2 O_3}\ox W_F$ commutes with
\begin{equation}
    \mathbb E_{U}\left[\left((1-\gamma)\dketbra{U^*}{U^*}+\frac{\gamma}{2}I_4\right)^{\otimes 3}_{I^3 O^3}\ox\left(\dketbra{U}{U}\right)_{PF}\right],
\end{equation}
we find \eqref{eq:asumpt2} still holds after replacing $\check{C}_{P I^3 O^3 F}$ with 
\begin{align*}
    &(V_P\ox V^{*\ox3}_{I_1 I_2 I_3}\ox W^{*\ox3}_{O_1 O_2 O_3}\ox W_F)\\
    \cdot&\check{C}_{P I^3 O^3 F}\cdot(V_P^\dag\ox V^{T\ox3}_{I_1 I_2 I_3}\ox W^{T\ox3}_{O_1 O_2 O_3}\ox W^\dag_F)
\end{align*}
for any $V,W\in\operatorname{SU}(2)$, and so does it with its expectation
\begin{equation}\label{eq:twirl}
    \begin{aligned}
        &\bar{C}_{P I^3 O^3 F}=\mathbb E_{V,W}[(V_P\ox V^{*\ox3}_{I_1 I_2 I_3}\ox W^{*\ox3}_{O_1 O_2 O_3}\ox W_F)\\
        &    \cdot\check{C}_{P I^3 O^3 F}\cdot(V_P^\dag\ox V^{T\ox3}_{I_1 I_2 I_3}\ox W^{T\ox3}_{O_1 O_2 O_3}\ox W^\dag_F)].
    \end{aligned}
\end{equation}
Thus we have obtained $\bar{C}_{P I^3 O^3 F}$ satisfies
\begin{equation}\label{eq:asumpt_2}
\begin{cases}
    \begin{aligned}
        &\tr[\bar{C}_{P I^3 O^3 F}\\
        &\cdot\mathbb E_{U}[\left((1-\gamma)\dketbra{U^*}{U^*}+\frac{\gamma}{2}I_4\right)^{\otimes 3}_{I^3 O^3}\\
        &\ox\left(\dketbra{U}{U}\right)_{PF}]]>0
    \end{aligned}\\
    \left[V_P\ox V^{*\ox3}_{I_1 I_2 I_3}\ox W^{*\ox3}_{O_1 O_2 O_3}\ox W_F,\bar{C}_{P I^3 O^3 F}\right]=0
\end{cases}
\end{equation}
for all $V,W\in\operatorname{SU}(2)$. By the irreducible representation decomposition of 
\begin{equation}
    \operatorname{SU}(2)\rightarrow \operatorname{SU}(16):\ U\mapsto U^{*\ox 3}\ox U=2f_1(U)\oplus 3f_3(U)\oplus f_5(U),
\end{equation}
there exists $G\in\operatorname{SU}(16)$ such that for all $U\in\operatorname{SU}(2)$,
\begin{equation}
    G^\dag(U^{*\ox 3}\ox U)G=
    \begin{pmatrix}
        I_2\ox f_1(U)\\&I_3\ox f_3(U)\\&&f_5(U)
    \end{pmatrix},
\end{equation}
where $f_d$ is the $d$-dimensional irreducible representation of $\operatorname{SU}(2)$. Moreover by the irreducible representation decomposition of $\operatorname{SU}(2) \times \operatorname{SU}(2) \rightarrow \operatorname{SU}(256)$:
\begin{equation}
\begin{aligned}
    (V,W) \mapsto &V\ox V^{*\ox 3}\ox W^{*\ox3}\ox W\\
    &=\, 4 f_1(V)\ox f_1(W) \oplus 6 f_1(V)\ox f_3(W) \oplus 2 f_1(V)\ox f_5(W) \\
    &\oplus 6 f_3(V)\ox f_1(W) \oplus 9 f_3(V)\ox f_3(W) \oplus 3 f_3(V)\ox f_5(W) \\
    &\oplus 2 f_5(V)\ox f_1(W) \oplus 3 f_5(V)\ox f_3(W) \oplus f_5(V)\ox f_5(W),
\end{aligned}
\end{equation}

By employing the Clebsch-Gordan decomposition, we construct a specific unitary $G \in \operatorname{SU}(256)$(provided in our GitHub repository) such that for any $V,W \in \operatorname{SU}(2)$,

\begin{align*}
    &G^\dag\cdot(V\ox V^{*\ox 3}\ox W^{*\ox3}\ox W)\cdot G\\
    =&\operatorname{diag}(I_4\ox f_1(V)\ox f_1(W),I_6\ox f_1(V)\ox f_3(W),\\
    &I_2\ox f_1(V)\ox f_5(W),I_6\ox f_3(V)\ox f_1(W),\\
    &I_9\ox f_3(V)\ox f_3(W),I_3\ox f_3(V)\ox f_5(W),\\
    &I_2\ox f_5(V)\ox f_1(W),I_3\ox f_5(V)\ox f_3(W),\\
    &f_5(V)\ox f_5(W)).
\end{align*}

Since $ \bar{C}_{P I^3 O^3 F}$ is always commutative with 
$V_P\ox V^{*\ox3}_{I_1 I_2 I_3}\ox W^{*\ox3}_{O_1 O_2 O_3}\ox W_F$
for any $V,W\in\operatorname{SU}(2)$, 
by Schur's lemma, we have $\bar{C}_{P I^3 O^3 F}$ is of form
\begin{align*} 
    &G^\dag\cdot \bar{C}_{P I^3 O^3 F}\cdot G\\
    =&\operatorname{diag}(H_0\ox I_1,H_1\ox I_3,H_2\ox I_5,H_3\ox I_3, H_4\ox I_9,\\
    &H_5\ox I_{15},H_6\ox I_5,H_7\ox I_{15},h_8 I_{25})
\end{align*}
where $H_0=\{H_{0jk}\}_{jk}\in \mathbb{C}^{4\times4}$, $H_1=\{H_{1jk}\}_{jk}, H_3=\{H_{3jk}\}_{jk}\in \mathbb{C}^{6\times6}$, $H_2=\{H_{2jk}\}_{jk}, H_6=\{H_{6jk}\}_{jk}\in \mathbb{C}^{2\times2}$, $H_4=\{H_{4jk}\}_{jk}\in \mathbb{C}^{9\times9}$, $H_5=\{H_{5jk}\}_{jk}, H_7=\{H_{7jk}\}_{jk}\in \mathbb{C}^{3\times3}$, and $h_8\in \mathbb{C}$.

Furthermore, the performance operator in our objective function, $\operatorname{Tr}(\Omega \bar{C}_{P I^3 O^3 F})$, where
\begin{equation}
    \Omega = \mathbb E_{U}\left[\left((1-\gamma)\dketbra{U^*}{U^*}+\frac{\gamma}{2}I_4\right)^{\otimes 3}_{I^3 O^3}\ox\left(\dketbra{U}{U}\right)_{PF}\right],
\end{equation}
is completely invariant under any permutation of the three physical slots $(I_1, O_1)$, $(I_2, O_2)$, and $(I_3, O_3)$. Let $P_\pi$ denote the unitary representation of a permutation $\pi \in S_3$ acting on these physical slots, such that $P_\pi^\dagger \Omega P_\pi = \Omega$. Because the feasible set of parallel strategies is closed under these slot permutations, this objective symmetry allows us to enforce exact $S_3$ permutation symmetry for the parallel optimization, restricting the search to $\bar{C}_{P I^3 O^3 F} = P_\pi \bar{C}_{P I^3 O^3 F} P_\pi^\dagger$ without changing the parallel optimum.

Since the symmetric group $S_3$ is generated by two transpositions---swapping the first and second slots ($P_{12}$) and swapping the second and third slots ($P_{23}$)---we simply extract the homogeneous linear constraints directly in the physical basis:
\begin{align}
    \bar{C}_{P I^3 O^3 F} - P_{12} \bar{C}_{P I^3 O^3 F} P_{12}^\dagger &= 0, \\
    \bar{C}_{P I^3 O^3 F} - P_{23} \bar{C}_{P I^3 O^3 F} P_{23}^\dagger &= 0.
\end{align}

To obtain an analytically tractable upper bound for the parallel problem, we then enlarge the feasible set by replacing the parallel causality constraints with the causality constraints of one fixed sequential order, while retaining the $S_3$ symmetry conditions derived above. Thus every $S_3$-symmetric parallel comb is feasible for the following relaxed problem, whose Choi operator $\bar{C}$ satisfies:
\begin{equation}
\begin{cases}
    \bar{C} \ge 0 \\
    {}_{F} \bar{C} = {}_{O_3 F} \bar{C} \\
    {}_{I_3 O_3 F} \bar{C} = {}_{O_2 I_3 O_3 F} \bar{C} \\
    {}_{I_2 O_2 I_3 O_3 F} \bar{C} = {}_{O_1 I_2 O_2 I_3 O_3 F} \bar{C} \\
    {}_{I^3 O^3 F} \bar{C} = {}_{P I^3 O^3 F} \bar{C} \\
    \tr[\bar{C}] = 16\\
    \bar{C}_{P I^3 O^3 F} - P_{12} \bar{C}_{P I^3 O^3 F} P_{12}^\dagger = 0 \\
    \bar{C}_{P I^3 O^3 F} - P_{23} \bar{C}_{P I^3 O^3 F} P_{23}^\dagger = 0
\end{cases}
\end{equation}
which is equivalent to
\begin{equation}\label{eq:full_constraints}
\begin{cases}
    H_0, H_1, \dots, h_8 \ge 0\\   
    \mathcal{L}_{\text{sym \& causal}}(H_0, H_1, \dots, h_8) = 0
\end{cases}
\end{equation}
where $\mathcal{L}_{\text{sym \& causal}}$ represents the exact algebraic linear constraints. Specifically, this constraint system consists of 196 real variables and 171 independent equality constraints, leaving 25 free affine degrees of freedom. The computation of these linear constraints and the certificate check used below are available in our GitHub repository.

Given the performance operator $\Omega$, we map it into the Schur-Weyl basis using the transformation matrix $G$. Since $\Omega$ has the same $\operatorname{SU}(2)$ symmetries as the twirled strategy, the transformed operator $G^\dag \Omega G$ has the same block structure:
\begin{align*}
    &G^\dag\cdot\mathbb E_{U}\left[\left((1-\gamma)\dketbra{U^*}{U^*}+\frac{\gamma}{2}I_4\right)^{\otimes 3}_{I^3 O^3}\ox\left(\dketbra{U}{U}\right)_{PF}\right]\cdot G\\ 
    = &\bigoplus_{k=0}^{8} \left( \Omega_k \otimes I_{d_k} \right),
\end{align*}
where $I_{d_k}$ is the identity on the multiplicity space of the $k$-th sector. Writing
\begin{equation}
    M=G^\dag \bar{C}_{P I^3 O^3 F}G
    =\bigoplus_{k=0}^{8}\left(H_k\otimes I_{d_k}\right),
\end{equation}
we obtain
\begin{align}
    \operatorname{Tr}(\Omega \bar{C}_{P I^3 O^3 F}) 
    &= \operatorname{Tr}\left( (G^\dag \Omega G) M \right) \nonumber \\
    &= \operatorname{Tr}\left[ \left( \bigoplus_{k=0}^{8} \Omega_k \otimes I_{d_k} \right) \left( \bigoplus_{k=0}^{8} H_k \otimes I_{d_k} \right) \right] \nonumber \\
    &= \sum_{k=0}^{8} d_k \operatorname{Tr}(\Omega_k H_k).
\end{align}

We write $W_k=d_k\Omega_k$ for the multiplicity-weighted performance blocks. The upper-bound certificate uses only the first two Schur blocks. Introduce Hermitian matrices $Y_0 \in \mathbb{C}^{4 \times 4}$ and $Y_1 \in \mathbb{C}^{6 \times 6}$. Defining $y_{00} = -y_1 - 2y_2 + y_3 + y_4 + 2y_5$, the matrices are explicitly constructed as:

\begin{equation}
    Y_0 = \frac{1}{3}
    \begin{pmatrix}
        y_{00} & y_6 & y_{11} & y_7 \\
        y_6^* & y_1 & y_8 & y_{10} \\
        y_{11}^* & y_8^* & y_3 & y_9 \\
        y_7^* & y_{10}^* & y_9^* & y_4
    \end{pmatrix},
\end{equation}

\begin{equation}
    Y_1 = 
    \begin{pmatrix}
        y_{00} & y_6 & 0 & y_{11} & y_7 & 0 \\
        y_6^* & y_1 & 0 & y_8 & y_{10} & 0 \\
        0 & 0 & y_2 & 0 & 0 & -\frac{y_{10}+y_{11}}{2} \\
        y_{11}^* & y_8^* & 0 & y_3 & y_9 & 0 \\
        y_7^* & y_{10}^* & 0 & y_9^* & y_4 & 0 \\
        0 & 0 & -\frac{y_{10}^*+y_{11}^*}{2} & 0 & 0 & y_5
    \end{pmatrix}.
\end{equation}

We choose the symmetry-tied parameters $y_3 = y_1$, $y_5 = y_2$, $y_{11} = y_6$, and $y_9 = y_{10} = -y_6$, with $y_6, y_7, y_8 \in \mathbb{R}$. The independent certificate variables are:

\begin{align*}
    y_1(\gamma) &= \frac{8\sqrt{2}-25}{128}\gamma^3 + \frac{105-24\sqrt{2}}{128}\gamma^2 + \frac{8\sqrt{2}-73}{64}\gamma + \frac{9}{16}, \\
    y_2(\gamma) &= \frac{4\sqrt{2}+5}{64}\gamma^3 - \frac{6\sqrt{2}+9}{32}\gamma^2 + \frac{\sqrt{2}+2}{8}\gamma, \\
    y_4(\gamma) &= \frac{8\sqrt{2}-7}{128}\gamma^3 + \frac{27-24\sqrt{2}}{128}\gamma^2 + \frac{8\sqrt{2}-19}{64}\gamma + \frac{3}{16}, \\
    y_6(\gamma) &= \frac{3\sqrt{3}}{128}(3\gamma^3 - 13\gamma^2 + 18\gamma - 8), \\
    y_7(\gamma) &= \frac{8\sqrt{2}+29}{128}\gamma^3 - \frac{24\sqrt{2}+75}{128}\gamma^2 + \frac{8\sqrt{2}+35}{64}\gamma - \frac{3}{16}, \\
    y_8(\gamma) &= -\frac{8\sqrt{2}+47}{128}\gamma^3 + \frac{24\sqrt{2}+153}{128}\gamma^2 - \frac{8\sqrt{2}+89}{64}\gamma + \frac{9}{16}.
\end{align*}

For this choice, a direct symbolic spectral check shows that the required slack matrices satisfy
\begin{equation}
    Y_0-W_0\succeq0,\qquad Y_1-W_1\succeq0
\end{equation}
for all $\gamma\in(0,1)$.

The corresponding certificate value is
\begin{equation}
    F_{\mathrm{ub}}
    =\frac{1}{24}(2 - \gamma)(12 + (8\sqrt{2} - 9)\gamma + (3 - 8\sqrt{2})\gamma^2).
\end{equation}
The certificate is checked on the full affine slice by verifying the following symbolic identity for every block collection satisfying the complete linear system in Eq.~\eqref{eq:full_constraints}:
\begin{equation}\label{eq:full_remainder_identity}
\begin{aligned}
    &F_{\mathrm{ub}}-\sum_{k=0}^{8}\operatorname{Tr}(W_kH_k)\\
    &\qquad=\operatorname{Tr}[(Y_0-W_0)H_0]
    +\operatorname{Tr}[(Y_1-W_1)H_1].
\end{aligned}
\end{equation}
Concretely, after solving the 171 independent linear constraints and substituting the resulting 25 free affine parameters into the difference between the two sides of Eq.~\eqref{eq:full_remainder_identity}, the constant term and all coefficients of the free parameters vanish identically. Since $H_0,H_1\succeq0$ and $Y_0-W_0,Y_1-W_1\succeq0$, Eq.~\eqref{eq:full_remainder_identity} proves
\begin{equation}
    \sum_{k=0}^{8}\operatorname{Tr}(W_kH_k)\le F_{\mathrm{ub}}
\end{equation}
for the complete fixed-order, $S_3$-symmetric relaxation, and hence for the original parallel 3-slot class. On the other hand, this bound can be achieved by our protocol in Appendix~\ref{sec:cir}.

Finally, to show the universality, i.e., that the strategy yields a fidelity improvement for every unitary in $\mathrm{SU}(2)$, recall that the twirled comb $\bar{C}_{P I^3 O^3 F}$ in Eq.~\eqref{eq:twirl} remains feasible. Crucially, when a unitary $U\in SU(2)$ is fed into the slots of $\bar{C}$, this is mathematically equivalent to viewing the twirl as acting on both sides of $U$, rather than on the comb itself. Consequently, the output fidelity becomes entirely independent of the specific choice of the input unitary $U$. Therefore, any improvement in the average fidelity directly guarantees an improvement for every individual $U$, which establishes this universality. The explicit comb we construct in Appendix~\ref{sec:cir} satisfies exactly this property.
\end{proof}

\section{The Circuit Decomposition of 3-Slot Unitary Purification Protocol}\label{sec:cir}

This appendix gives the gate-level realization of the parallel 3-slot strategy used in the achievability part of Theorem~\ref{thm:go}.  The notation follows the schematic in Fig.~\ref{fig:par}: the encoder isometry
\begin{equation}
    V_{\mathrm{enc}}:\mathcal{H}_{P}\longrightarrow
    \mathcal{H}_{R}\otimes\mathcal{H}_{I_1}\otimes\mathcal{H}_{I_2}\otimes\mathcal{H}_{I_3}
\end{equation}
maps the input qubit $P$ to the three channel inputs $I_1,I_2,I_3$ and a one-qubit memory register $R$.  After the three noisy channels act in parallel, the decoder isometry
\begin{equation}
\begin{aligned}
    V_{\mathrm{dec}}&:\mathcal{H}_{R}\otimes\mathcal{H}_{O_1}\otimes\mathcal{H}_{O_2}\otimes\mathcal{H}_{O_3}
    \\
    &\longrightarrow
    \mathcal{H}_{F}\otimes\mathcal{H}_{\mathrm{Ancilla}_1}\otimes\mathcal{H}_{\mathrm{Ancilla}_2}\otimes\mathcal{H}_{\mathrm{Ancilla}_3}\otimes\mathcal{H}_{\mathrm{Ancilla}_4}
\end{aligned}
\end{equation}
outputs the purified qubit $F$, while the four ancillary output systems $\mathrm{Ancilla}_1,\mathrm{Ancilla}_2,\mathrm{Ancilla}_3,\mathrm{Ancilla}_4$ are traced out.

The circuits below are drawn as unitary circuits on a larger register, but only fixed-ancilla columns are used to define the isometries.  For the encoder circuit in Fig.~\ref{fig:parallel_encoder_circuit}, the three channel-input wires are initialized as
\begin{equation}
    \ket{0}_{I_1}\ket{0}_{I_2}\ket{0}_{I_3}.
\end{equation}
The top wire is the input $P$ and becomes the memory qubit $R$ at the encoder output.  The displayed top-to-bottom wire order is
\begin{equation}
    P\to R,\quad I_3,\quad I_2,\quad I_1,
\end{equation}
which is only a drawing convention; the slot labels are the same $I_1,I_2,I_3$ as in Fig.~\ref{fig:par}.

\begin{figure*}[!t]
    \centering
    \includegraphics[width=0.68\textwidth]{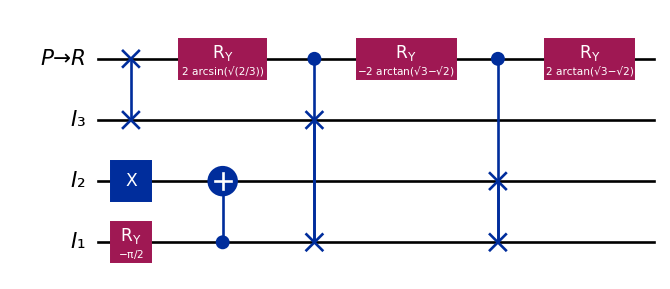}
    \caption{Gate-level encoder circuit for the parallel 3-slot strategy.  The circuit realizes $V_{\mathrm{enc}}$ by applying the displayed four-qubit unitary to the input qubit $P$ together with three fixed ancillary inputs $\ket{0}_{I_1}\ket{0}_{I_2}\ket{0}_{I_3}$.  The output systems are the memory qubit $R$ and the three channel inputs $I_1,I_2,I_3$.}
    \label{fig:parallel_encoder_circuit}
\end{figure*}

For the decoder circuit in Fig.~\ref{fig:parallel_decoder_circuit}, the input register consists of $O_1,O_2,O_3$ together with the memory qubit $R$, and the final-output wire is initialized as $\ket{0}_{F}$.
The circuit is displayed in the top-to-bottom order
\begin{equation}
\begin{aligned}
    O_3&\to \mathrm{Ancilla}_4,\quad O_2\to \mathrm{Ancilla}_3,\\
    O_1&\to \mathrm{Ancilla}_2,\quad R\to \mathrm{Ancilla}_1,
    \quad F.
\end{aligned}
\end{equation}
Thus the first four wires are the decoder inputs, while their outputs are precisely the four ancillary systems that are discarded after decoding.  The bottom wire carries the initialized qubit $F$ and becomes the final output system of the protocol.

\begin{figure*}[!t]
    \centering
    \includegraphics[width=0.97\textwidth]{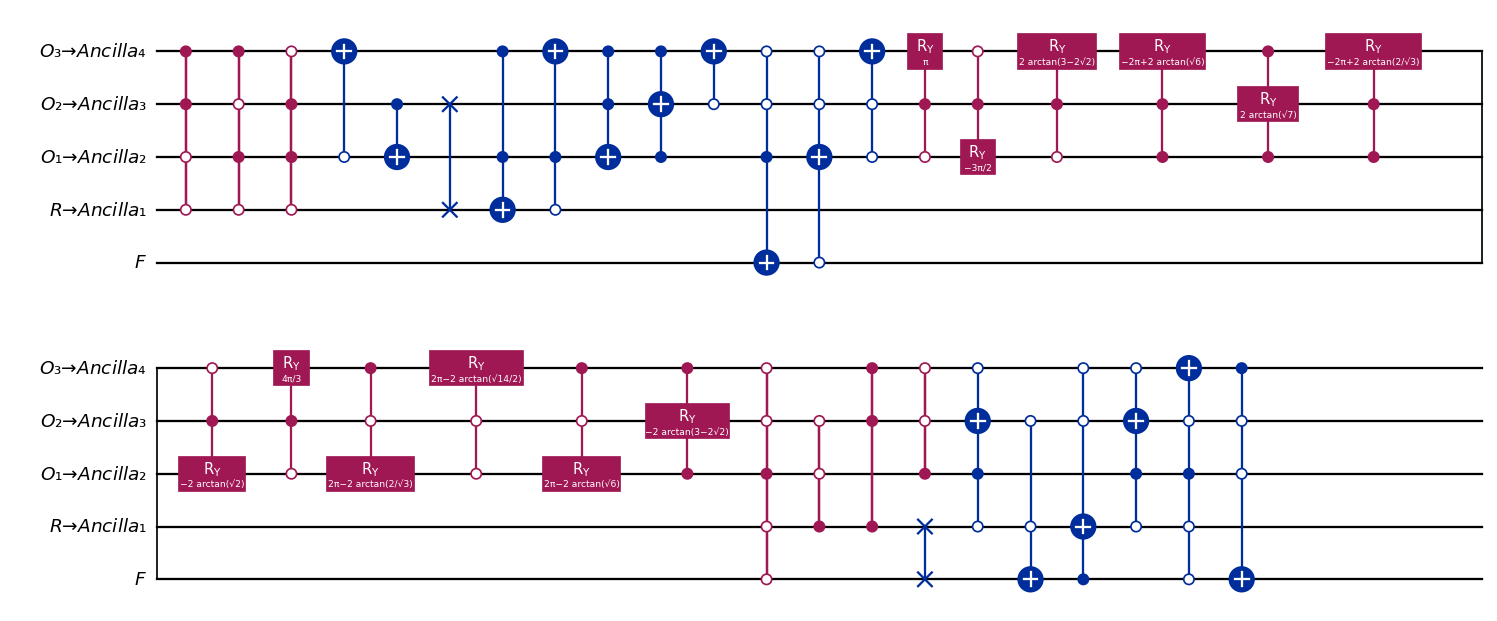}
    \caption{Gate-level decoder circuit for the parallel 3-slot strategy.  The circuit realizes $V_{\mathrm{dec}}$ by applying the displayed five-qubit unitary with the final-output wire initialized to $\ket{0}_{F}$.  At the output, $F$ is retained as the purified system qubit, while $\mathrm{Ancilla}_1,\mathrm{Ancilla}_2,\mathrm{Ancilla}_3,\mathrm{Ancilla}_4$ are traced out.}
    \label{fig:parallel_decoder_circuit}
\end{figure*}

Open control dots in Figs.~\ref{fig:parallel_encoder_circuit} and~\ref{fig:parallel_decoder_circuit} denote control on $\ket{0}$, while filled control dots denote control on $\ket{1}$.  All rotation labels are exact analytic angles.  Extracting the encoder columns corresponding to $\ket{0}_{I_1}\ket{0}_{I_2}\ket{0}_{I_3}$, extracting the decoder columns corresponding to $\ket{0}_{F}$, and tracing out $\mathrm{Ancilla}_1,\mathrm{Ancilla}_2,\mathrm{Ancilla}_3,\mathrm{Ancilla}_4$ gives the Choi comb $C_{P I^3 O^3 F}$ used in the achievability proof.

\vfill

\end{document}